\documentclass[11pt]{article}


\usepackage[margin=1in]{geometry}
\usepackage{amsmath,amsthm,amsfonts,amssymb,graphicx}
\usepackage{thm-restate}

\usepackage[all]{xy}

\usepackage[pagebackref]{hyperref} 
\usepackage{hypcap}
\hypersetup{
    bookmarksnumbered=true, 
    unicode=false, 
    pdfstartview={FitH}, 
    pdftitle={Title}, 
    pdfauthor={Authors}, 
    pdfsubject={}, 
    pdfcreator={}, 
    pdfproducer={}, 
    pdfkeywords={}, 
    pdfnewwindow=true, 
    colorlinks=true, 
    linkcolor=blue, 
    citecolor=blue, 
    filecolor=blue, 
    urlcolor=blue 
}

\usepackage{cite}

\usepackage{tikz}

\renewcommand{\backref}[1]{}

\renewcommand{\backrefalt}[4]{%
\ifcase #1 %
\or 
[p.\ #2]%
\else 
[pp.\ #2]%
\fi}


\newtheorem{proposition}{Proposition}
\newtheorem{corollary}{Corollary}

\newtheorem{question}{Question}

\theoremstyle{definition}
\newtheorem{definition}{Definition}

\newcommand{\eq}[1]{\hyperref[eq:#1]{(\ref*{eq:#1})}}
\renewcommand{\sec}[1]{\hyperref[sec:#1]{Section~\ref*{sec:#1}}}
\newcommand{\thm}[1]{\hyperref[thm:#1]{Theorem~\ref*{thm:#1}}}
\newcommand{\lem}[1]{\hyperref[lem:#1]{Lemma~\ref*{lem:#1}}}
\newcommand{\prop}[1]{\hyperref[prop:#1]{Proposition~\ref*{prop:#1}}}
\newcommand{\cor}[1]{\hyperref[cor:#1]{Corollary~\ref*{cor:#1}}}
\newcommand{\fig}[1]{\hyperref[fig:#1]{Figure~\ref*{fig:#1}}}
\newcommand{\tab}[1]{\hyperref[tab:#1]{Table~\ref*{tab:#1}}}
\newcommand{\app}[1]{\hyperref[app:#1]{Appendix~\ref*{app:#1}}}
\newcommand{\ques}[1]{\hyperref[q:#1]{Question~\ref*{q:#1}}}


\newcommand{\comment}[1]{}  

\newcommand{\D}{D^\mathrm{sc}}
\newcommand{\R}{R^\mathrm{sc}}
\newcommand{\Rz}{R_0^\mathrm{sc}}
\newcommand{\e}{\mathrm{exp}}

\newcommand{\defeq}{=}

\newcommand{\MAJ}{\mathsf{MAJ}}
\newcommand{\FMAJ}{\mathsf{4\textrm{-}MAJ}}
\newcommand{\rT}{\mathrm{T}}

\newcommand{\PPRT}{\mathrm{PPRT}}

\begin{document}


\title{Separating decision tree complexity\\ from subcube partition complexity}

\author{
\normalsize Robin Kothari\footnote{Center for Theoretical Physics, Massachusetts Institute of Technology. Part of this work was performed while the author was a student at the David R.\ Cheriton School of Computer Science and the Institute for Quantum Computing, University of Waterloo.
\texttt{rkothari@mit.edu}
} 
\and
\normalsize David Racicot-Desloges\footnote{D\'epartement de Math\'ematiques, Facult\'e des Sciences, Universit\'e de Sherbrooke.
Part of this work was completed while the author was an intern student at the Centre for Quantum Technologies, National University of Singapore.
\texttt{David.Racicot-Desloges@USherbrooke.ca}
} 
\and
\normalsize Miklos Santha\footnote{LIAFA, CNRS, Universit\'e Paris Diderot, Paris, France and 
Centre for Quantum Technologies, National University of Singapore.
\texttt{miklos.santha@gmail.com}
}
}

\date{}
\maketitle


\begin{abstract}
The subcube partition model of computation is at least as powerful as decision trees 
but no separation between these models was known. We show that there exists a function whose deterministic
subcube partition complexity is asymptotically smaller than its 
randomized decision tree complexity, resolving an open problem of Friedgut, Kahn, and Wigderson (2002). Our lower bound is based on the information-theoretic techniques first introduced to lower bound the randomized decision tree complexity of the recursive majority function.

We also show that the public-coin partition bound, 
the best known lower bound method for randomized decision tree complexity subsuming other general techniques such as block sensitivity, approximate degree, randomized certificate complexity, and the classical adversary bound, also lower bounds randomized subcube partition complexity. This shows that all these lower bound techniques cannot prove optimal lower bounds for randomized decision tree complexity, which answers an open question of Jain and Klauck (2010) and Jain, Lee, and Vishnoi (2014).

\end{abstract}

\section{Introduction}
\label{sec:intro}

The decision tree is a widely studied model of computation. While we have made significant progress in understanding this model (e.g., see the survey by Buhrman and de Wolf \cite{BW02}), questions from over 40 years ago still remain unsolved \cite{Ros73}.

In the decision tree model, we wish to compute a function $f:\{0,1\}^n \to \{0,1\}$ on an input $x \in \{0,1\}^n$, but we only have access to the input via a black box. The black box can be queried with an index $i \in [n]$, where $[n] \defeq  \{1,2,\ldots,n\}$, and will respond with the value of $x_i$, the $i$th bit of $x$. The goal is to compute $f(x)$, while minimizing the number of queries made to the black box. 

For a function  $f:\{0,1\}^n \to \{0,1\}$, let $D(f)$ denote the deterministic query complexity (or decision tree complexity) of computing $f$, the minimum number of queries made by a deterministic algorithm that computes $f$ correctly on all inputs. Let $R_0(f)$ denote the zero-error randomized query complexity of computing $f$, the minimum expected cost of a zero-error randomized algorithm
that computes $f$ correctly on all inputs. Finally, let $R(f)$ denote the bounded-error randomized query complexity of computing $f$, the number of queries made in the worst case by a randomized algorithm that outputs $f(x)$ on input $x$ with probability at least 2/3. More precise definitions can be found in \sec{prelim}.

Several lower bound techniques have been developed for query complexity over the years, most of which are based on the following observation: A decision tree that computes $f$ and makes $d$ queries partitions the set of all inputs, the hypercube $\{0,1\}^n$, into a set of monochromatic subcubes where each subcube has at most $d$ fixed variables. A subcube is a restriction of the hypercube in which the values of some subset of the variables have been fixed. For example, the set of $n$-bit strings in which the first variable is set to 0 is a subcube of $\{0,1\}^n$ with one fixed variable. 
A subcube is monochromatic if $f$ takes the same value on all inputs in the subcube. This idea is also the basis of many lower bound techniques in communication complexity \cite{KN06}, where a valid protocol partitions the space of inputs into monochromatic rectangles.

However, not all subcube partitions arise from decision trees, which naturally leads to a potentially more powerful model of computation. 
This model is 
called the subcube partition model in \cite{FKW02}, but has been studied before under different names (see e.g., \cite{BOH90}). The deterministic subcube partition complexity of $f$, denoted by $\D(f)$, is the minimum $d$ such that there is a partition of the hypercube into a set of monochromatic subcubes in which each subcube has at most $d$ fixed variables. Since a decision tree making $d$ queries always gives rise to such a partition, we have $\D(f)\leq D(f)$. Similarly, we define zero-error and bounded-error versions of subcube partition complexity, denoted by $\Rz(f)$ and $\R(f)$, respectively, and obtain the inequalities  $\Rz(f) \leq R_0(f)$ and $\R(f) \leq R(f)$. As expected, we also have $\Rz(f) \leq \D(f)$
and $\R(f) \leq \D(f)$.

This brings up the obvious question of whether these models are equivalent. Separating 
them is difficult, precisely because most lower bound techniques for query complexity also lower bound subcube partition complexity. 
The analogous question in communication complexity is also a long-standing open problem (see \cite[Open Problem 2.10]{KN06} or \cite[Chapter 3.2]{Juk12}). In fact, Friedgut, Kahn, and Wigderson \cite[Question 1.1]{FKW02} explicitly ask if these measures are asymptotically different in the randomized model with zero error:

\begin{question}\label{q:fkw}
Is there a function (family) $f= (f_n)$ such that $\Rz(f) = o(R_0(f))$?   
\end{question}

Similarly, one can ask the same question for bounded-error randomized query complexity. The main result of this paper resolves these questions: 

\begin{restatable}{theorem}{main}
\label{thm:main}
There exists a function $f= (f_h)$, with $f_h:\{0,1\}^{4^h} \to \{0,1\}$, such that $\D(f) \leq 3^h$, but $D(f) = 4^h$, $R_0(f) \geq 3.2^h$, and $R(f) = \Omega(3.2^h)$.
\end{restatable}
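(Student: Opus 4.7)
The natural approach is to find a base function $g: \{0,1\}^4 \to \{0,1\}$ that already separates $\D$ from $D$ on a single level, and to set $f_h$ to be the $h$-fold self-composition of $g$ along a complete $4$-ary tree of depth $h$: $f_0 = \mathrm{id}$ and $f_h(x) = g(f_{h-1}(x^{(1)}), f_{h-1}(x^{(2)}), f_{h-1}(x^{(3)}), f_{h-1}(x^{(4)}))$, where $x^{(i)}$ denotes the $i$-th block of $4^{h-1}$ input bits. The target parameters for $g$ would be $\D(g) \leq 3$, $D(g) = 4$, $R_0(g) \geq 3.2$, and $R(g) = \Omega(1) \cdot 3.2$. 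Given the $\FMAJ$ macro declared in the preamble, I expect $g$ to be a carefully chosen $4$-bit majority-like function; I would first search by hand (or by a computer check) for a $4$-bit (possibly partial) function satisfying the four constraints above.

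First, I would certify $\D(g) \leq 3$ by exhibiting an explicit monochromatic subcube partition of $\{0,1\}^4$ in which every subcube fixes at most three coordinates. Second, I would certify $D(g) = 4$ by a short adversary argument: any depth-$3$ decision tree fails to query some variable on a root-to-leaf path, and flipping that variable can be shown to change $g$'s value. Both are finite case checks. Third, I would invoke the standard composition inequalities $\D(g_1 \circ g_2) \leq \D(g_1)\cdot\D(g_2)$ (replace each monochromatic subcube of the outer partition by a copy of the inner partition, restricted appropriately) and $D(g_1 \circ g_2) = D(g_1)\cdot D(g_2)$ (an adversary argument forcing the algorithm to fully evaluate each inner copy whose value is needed). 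Iterating gives $\D(f_h) \leq 3^h$ and $D(f_h) = 4^h$.

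The heart of the proof lies in the randomized lower bounds. For $R_0$, I would pick a hard distribution $\mu$ on $\{0,1\}^4$ for which every zero-error algorithm for $g$ makes at least $3.2$ queries in expectation, and then use the Snir/Saks--Wigderson style inductive direct-sum argument, which is essentially clean for zero-error composition, to conclude $R_0(f_h) \geq 3.2^h$. For $R$, I would adapt the information-theoretic framework of Jayram--Kumar--Sivakumar (and its refinements by Landau, Magniez--Nayak--Santha--Szegedy, and others) that was developed for the analogous lower bound on recursive $3$-majority. The scheme is: define a hard product-form distribution on the inputs to $g$ under which one query reveals only a small amount of information about $g$'s output; set up an inductive potential based on a conditional-information / entropy quantity; and show that each recursive level contributes a multiplicative gain of at least $3.2$ in query cost per unit of information, combining the contributions by the chain rule and the data-processing inequality.

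The main obstacle will be establishing the sharp constant $3.2$ for the randomized complexity of $g$ and pushing it through the recursion with only an $O(1)$ loss over $h$ levels. For zero-error this is essentially routine once the single-level bound is in place, but for bounded-error one has to avoid the standard loss inherent in naive direct-sum proofs, which would only give $R(f_h) \geq c^h$ for some constant $c < R(g)$. Achieving $\Omega(3.2^h)$ requires the inductive information measure to be tuned so that every internal node of the recursion pays the full $3.2$ factor in expectation, exactly as in the JKS-type analyses of recursive majority. Designing that potential for $\FMAJ$, and verifying the single-level information-theoretic inequality on a hand-picked hard distribution over $\{0,1\}^4$, is where I expect the real work to be.
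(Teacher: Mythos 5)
The overall skeleton — take $\FMAJ$ as the base function, certify $\D(\FMAJ)\leq 3$, $D(\FMAJ)=4$, compose along a depth-$h$ quaternary tree, and prove the randomized lower bound via a hard distribution in the Jayram--Kumar--Sivakumar/LNPV family — is exactly the paper's route, and your deterministic and subcube-partition parts (explicit subcube partition of $\{0,1\}^4$ into eight subcubes of codimension $3$, adversary for $D=4$, composition for both measures) are right.

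The gap is in how you propose to handle the randomized lower bounds, and it is not cosmetic. You claim that once a hard distribution $\mu$ with $\Delta_0^\mu(\FMAJ)\geq 3.2$ is found, the zero-error bound $R_0(\FMAJ_h)\geq 3.2^h$ follows by a ``clean'' Saks--Wigderson direct-sum. That step is false in general: zero-error randomized query complexity does \emph{not} compose multiplicatively, the canonical counterexample being the iterated NAND tree, where $R_0(\mathsf{NAND}_h)=\Theta\big(((1+\sqrt{33})/4)^h\big)$ is asymptotically smaller than $R_0(\mathsf{NAND})^h=2^h$. The paper does not treat $R_0$ as the routine case; it proves a single distributional bound $\Delta_\varepsilon^{d_h}(\FMAJ_h)\geq(1-2\varepsilon)(16/5)^h$ valid for all $\varepsilon\geq 0$, and $R_0$ and $R$ both fall out by Yao. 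The recursion is carried not by entropy/chain-rule as in the original JKS, but by the LNPV-style minority-path bookkeeping: for a randomized algorithm $B$ one tracks $E_B(v,u)$, the expected number of queries inside the subtree of $v$ conditioned on $u$ lying on the (randomized) minority path, and the key inequalities are $J^\varepsilon(h,\ell)\geq K^\varepsilon(h,\ell)+\tfrac{1}{5}J^\varepsilon(h,\ell-1)$ and $K^\varepsilon(h,\ell)\geq 3J^\varepsilon(h-1,\ell-1)$. Note that the constant $16/5$ is \emph{not} a single-level hardness value that gets powered up; it appears as the fixed point $3+\tfrac{1}{5}$ of this two-parameter recurrence, where the $3$ comes from embedding $\FMAJ_{h-1}$ into the three non-minority children and the extra $\tfrac{1}{5}$ from the algorithm's residual work along the minority branch of the same tree. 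The hard distribution (mass $2/5$ on the ``tie-breaker flipped'' input $1000$, masses $1/6$ and $1/30$ on the other non-constant classes, and crucially mass $0$ on $0000$) is reverse-engineered so that, conditioned on a node being on the minority path, each non-first child is the minority with probability exactly $1/5$ and the first child with probability $2/5$, which is what makes the $K\to 3J$ reduction exact. Finally, since $\FMAJ$ is not symmetric in its four inputs, the LNPV argument for $3$-MAJ does not port over directly; the asymmetric identification/restriction in the $K\geq 3J$ lemma (choose $w=v_1$ with probability $1/5$ and $w=v_i$ with probability $4/15$ otherwise, then plant a majority-consistent completion on the siblings) is the place where the real design work sits, and your plan does not yet engage with it.
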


This shows that query complexity and subcube partition complexity are asymptotically different in the deterministic, zero-error, and bounded-error settings. Besides resolving this question, our result has another application.
We know several techniques to lower bound bounded-error randomized query complexity, such as approximate polynomial degree \cite{NS94}, block sensitivity \cite{N91}, randomized certificate complexity \cite{A06} and the classical adversary bound \cite{LM08,SS06,A08}. All these techniques are subsumed by the partition bound of Jain and Klauck \cite{JK10}, which in turn is subsumed by the public-coin partition bound  of Jain, Lee, and Vishnoi \cite{JLV14}. Additionally, this new lower bound is within a quadratic factor of randomized query complexity. In other words, if $\PPRT(f)$ denotes the bounded-error public-coin partition bound for a function $f$, we have $\PPRT(f) \leq R(f)$ and also $R(f) = O(\PPRT(f)^2)$. This leaves open the intriguing possibility that this technique is optimal and is asymptotically equal to bounded-error randomized query complexity. Jain, Lee, and Vishnoi \cite{JLV14} 
indeed ask the following question:

\begin{question}\label{q:jain}
Is there a function (family) $f= (f_n)$ such that $\PPRT(f) =  o(R(f))$?
\end{question}

Our result also answers this question, because, as we show in \sec{prelim}, $\PPRT(f) \leq \R(f)$. Thus, our asymptotic separation between $\R(f)$ and $R(f)$ also separates $\PPRT(f)$ from $R(f)$.

We now provide a high-level overview of the techniques used in this paper. The main result is based on establishing the various complexities of a certain function. The function we choose is based on the quarternary majority function $\FMAJ:\{0,1\}^4 \to \{0,1\}$, defined as the majority of the four input bits, with ties broken by the first bit. 
This function has low deterministic subcube complexity, $\D(\FMAJ) \leq 3$, but has deterministic query complexity $D(\FMAJ) = 4$. From this function, we define an iterated function $\FMAJ_h$ on $4^h$ variables by composing the function with itself $h$ times, which gives us a function on $4^h$ bits. Since deterministic query complexity and deterministic subcube complexity behave nicely under composition, we have $D(\FMAJ_h) = 4^h$ and $\D(\FMAJ_h) \leq 3^h$. These results are further discussed in \sec{it-fmaj}. 
To prove \thm{main}, it remains to show that the randomized query complexity of this function is $\Omega(3.2^h)$.

We lower bound the randomized query complexity of $\FMAJ_h$ using a strategy similar to the information-theoretic technique of Jayram, Kumar, and Sivakumar \cite{JKS03} and its simplification by Landau, Nachmias, Peres, and Vanniasegaram \cite{LNPV06}. However, the original strategy was applied to lower bound a symmetric function (iterated 3-$\MAJ$), whereas our function is not symmetric since the first variable of $\FMAJ$ is different from the rest. We modify the technique to apply it to asymmetric functions and establish the claimed lower bound. The lower bound relies on choosing a ``hard distribution'' of inputs and establishing a recurrence relation between the complexities of the function and its subfunctions on this distribution. 
Unlike 3-$\MAJ$, where there is a natural candidate for a hard distribution, our chosen distribution is not obvious and is constrained by the fact that it must fit nicely into these recurrence relations. We prove this lower bound in \sec{lb-fmaj}.
We end with some discussion and open problems in \sec{disc}.

\section{Preliminaries}
\label{sec:prelim}

In this section, we formally define the various models of query complexity and subcube partition complexity, and the partition bound \cite{JK10} and public-coin partition bound \cite{JLV14}. We then study the relationships between these quantities.

For the remainder of the paper, let $f : \{0,1\}^n \rightarrow \{0,1\}$ be a Boolean function on $n$ bits and $x=(x_1,x_2,\dots,x_n) \in \{0,1\}^n$ be any input. Let $[n]$ denote the set $\{1,2,\ldots,n\}$ and let the support of a probability distribution $p$ be denoted by $\text{supp}(p)$. 
Lastly, we require the notion of composing two Boolean functions. If $f : \{0,1\}^n \rightarrow \{0,1\}$  and
$g : \{0,1\}^m \rightarrow \{0,1\}$ are two Boolean functions,  the {\em composed function} $f \circ g : \{0,1\}^{nm} \rightarrow \{0,1\}$ 
acts on the Boolean string $y = (y_{11}, \ldots, y_{1m}, y_{21}, \ldots, y_{nm})$ as
$f \circ g(y) = f ( g (y_{11}, \ldots, y_{1m}), \ldots , g (y_{n1}, \ldots, y_{nm})).$

\subsection{Decision tree or query complexity}

The deterministic query complexity of a function $f$, $D(f)$, is the minimum number of queries made by a deterministic algorithm that computes $f$ correctly.

Formally, a \emph{deterministic decision tree $A$ on $n$ variables} is a binary tree in which each leaf is labeled by either a $0$ or a $1$, and each internal node is labeled with a value $i \in [n]$. 
For every internal node of $A$, one of the two outgoing edges is labeled $0$ and the other edge is labeled $1$. On an input $x$, the algorithm $A$ follows the unique path from the root to one of its leaves in the natural way: for an internal node labeled with the value $i$, it follows the outgoing edge labeled by $x_i$. 
The output $A(x)$ of the algorithm $A$ on input $x$ is the label of the leaf of this path. We say that the decision tree $A$ \emph{computes} $f$ if $A(x)=f(x)$ for all $x$.

We define the {\em cost} of algorithm $A$ on input $x$, denoted by $C(A,x)$, to be the number of bits queried by $A$ on $x$, that is the number of internal nodes evaluated by $A$ on $x$.  The cost of an algorithm $A$, denoted $C(A)$, is the worst-case cost of the algorithm over all inputs $x$, that is $C(A) \defeq \max_{x}C(A,x)$. Now, let $\mathcal{D}_n$ denote the set of all deterministic decision trees on $n$ variables and let $\mathcal{D}(f) \subseteq \mathcal{D}_n$ be the set of all deterministic decision trees that compute $f$. 
We define the \emph{deterministic query complexity of $f$} as $D(f) \defeq \min_{A \in \mathcal{D}(f)}C(A)$.

One of the features of deterministic query complexity that we use in this paper is its composition property \cite{Mon14}. This property is very intuitive: it asserts that the best way to compute the composition of $f$ and $g$ is to use optimal algorithms for $f$ and $g$ independently.
\begin{proposition}
\label{prop:det-comp}
For any two Boolean functions $f$ and $g$, $D(f\circ g)=D(f)D(g)$.
\end{proposition}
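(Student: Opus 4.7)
The plan is to prove the two inequalities $D(f \circ g) \leq D(f) D(g)$ and $D(f \circ g) \geq D(f) D(g)$ separately. The upper bound is the easy direction and proceeds by simulation: take an optimal deterministic decision tree $T_f$ for $f$ and an optimal tree $T_g$ for $g$, and build a tree for $f \circ g$ that simulates $T_f$ on the virtual input whose $i$th bit equals $g(y_{i1},\ldots,y_{im})$. Whenever $T_f$ queries its $i$th bit, we invoke a copy of $T_g$ on block $i$, at a cost of at most $D(g)$ real queries. Since $T_f$ queries at most $D(f)$ virtual bits on any root-to-leaf path, the composed tree has depth at most $D(f) D(g)$.

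For the matching lower bound I would use an adversary argument that composes adversaries for $f$ and $g$. Fix an adversary $\mathcal{A}_f$ witnessing $D(f)$ (a strategy that, against any decision tree for $f$, keeps the value of $f$ undetermined during its first $D(f)-1$ answers) and, for each block $i \in [n]$, a fresh copy $\mathcal{A}_g^{(i)}$ of an adversary witnessing $D(g)$. Given any decision tree $T$ for $f \circ g$, simulate $T$ against these adversaries: a query to $y_{ij}$ is answered by $\mathcal{A}_g^{(i)}$ while block $i$ has received strictly fewer than $D(g)$ queries; when the $D(g)$th query to block $i$ arrives, we declare block $i$ \emph{completed}, ask $\mathcal{A}_f$ for the value of its $i$th input bit, and force $\mathcal{A}_g^{(i)}$ to commit its answers so that $g$ on the resulting block equals this value.

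At termination, the sequence of completed blocks constitutes a valid sequence of queries against $\mathcal{A}_f$. If fewer than $D(f)$ blocks were completed, $\mathcal{A}_f$ exhibits two assignments of the virtual $f$-input that agree on the completed blocks but on which $f$ takes different values; each such assignment can be extended to a full $(nm)$-bit input consistent with the queries $T$ answered (by choosing the uncompleted blocks arbitrarily and using $\mathcal{A}_g^{(i)}$'s flexibility), contradicting the correctness of $T$. Hence at least $D(f)$ blocks are completed and, since each completed block contributes exactly $D(g)$ queries, $T$ makes at least $D(f) D(g)$ queries. The one step I would verify carefully is the consistency check at each completion event: we must know that $\mathcal{A}_g^{(i)}$ could have committed to either value of $g$ immediately before its $D(g)$th query, so that $\mathcal{A}_f$'s demanded value is realizable. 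This is precisely the defining property of an optimal $g$-adversary, and is the crux that makes the composed strategy well-defined.
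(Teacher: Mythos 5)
The paper does not give a proof of this proposition; it cites Montanaro (reference [Mon14]) and moves on. So there is no in-paper argument to compare against, and I will assess your proof on its own merits.

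Your argument is correct. The upper bound by simulation is standard. For the lower bound, the composed adversary is the right idea, and you correctly identify the crux: the game-theoretic characterization of $D(g)$ guarantees that after $D(g)-1$ answers the $g$-adversary can still realize either output value, which is exactly what is needed so that $\mathcal{A}_f$'s requested bit is realizable at the moment block $i$ is declared completed, and likewise so that, on uncompleted blocks ($\leq D(g)-1$ queries), the block can still be steered to either value when extending $\mathcal{A}_f$'s two witnesses to full $(nm)$-bit inputs agreeing on every queried position. This last step is what delivers the contradiction with the correctness of $T$, and you invoke it correctly. Two small imprecisions: (i) a completed block contributes \emph{at least} $D(g)$ queries, not exactly $D(g)$ (the tree may continue to query it), though the bound $\geq D(f)D(g)$ still follows; (ii) the uncompleted blocks are not filled in ``arbitrarily'' but precisely by exploiting $\mathcal{A}_g^{(i)}$'s remaining freedom, which you do acknowledge in the same sentence. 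It is also worth stating explicitly the characterization you rely on, namely that $D(f)\geq k$ if and only if the adversary has a strategy keeping $f$ undetermined through $k-1$ adaptive queries; this follows because the $(k-1)$-round querier-vs-adversary game is a finite perfect-information game, so exactly one side has a winning strategy, and a winning querier strategy is exactly a depth-$(k-1)$ decision tree computing $f$.
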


We can now move on to randomized analogs of deterministic query complexity. 
In a randomized algorithm, the choice of the queries might also depend on some randomness. 
Formally, a \emph{randomized decision tree $B$ on $n$ variables} is defined by a probability distribution $b$ 
over $\mathcal{D}_n$, that is by a function $b:\mathcal{D}_n \to [0,1]$ such that $\sum_{A \in \mathcal{D}_n} b(A) = 1$.
On an input $x$, the algorithm $B$ picks a deterministic decision tree $A$ with probability $b(A)$ and outputs $A(x)$. Thus, for every $x$, the value $B(x)$ of $B$ on $x$ is a random variable. 

We say that a randomized algorithm $B$ computes $f$ with 
{\it error} $\varepsilon \geq 0$ if $\Pr[B(x)=g(x)] \geq 1-\varepsilon$ for all $x$, that is if $\sum_{A(x)=f(x)} b(A) \geq  1-\varepsilon$ for all $x$.
Let $\mathcal{R}_n$ be the set of all randomized decision trees over $n$ bits and let
$\mathcal{R}_{\varepsilon}(f) \subseteq \mathcal{R}_n$ be the set of all randomized decision trees that compute $f$ with error $\varepsilon$. A randomized algorithm $B$ then computes $f$ with zero error if $\text{supp}(b) \subseteq \mathcal{D}(f)$,
that is the probability distribution $b$ is completely supported on  the set of deterministic decision trees that compute $f$. A zero-error randomized algorithm, also known as a Las Vegas algorithm, always outputs the correct answer. The cost of a
zero-error randomized algorithm $B$ on $x$ is defined as $C(B,x) \defeq \sum_{A \in \mathcal{D}_n} b(A) C(A,x) = \mathbb{E}[C(A,x)]$, the expected number of queries made on input $x$. 
The {\em zero-error randomized query complexity of $f$}, denoted by $R_0(f)$, is defined as $R_0(f)= \min_{B \in \mathcal{R}_{0}(f)}\max_{x} C(B,x)$. 
From the definition of zero-error randomized query complexity, it is clear that $R_0(f) \leq D(f)$. 
The complexity $R_0(f)$ can be of strictly smaller order of growth than $D(f)$:
there exists a function $f$ for which $R_0(f) = o( D(f))$, e.g., the iterated \textsf{NAND}-function~\cite{SW86}.

Randomized algorithms with error $\varepsilon > 0$ might give incorrect answer on their inputs with probability $\varepsilon$.
We say that a randomized algorithm is
of {\em bounded-error} (sometimes called a Monte Carlo algorithm) if on any input $x$, the probabilistic output is incorrect with probability at most $1/3$. The constant $1/3$ is not important and replacing it with any constant strictly between $0$ and $1/2$ will only change the complexity by a constant multiplicative factor. 
For $\varepsilon>0$, 
the cost of an
$\varepsilon$-error randomized algorithm $B$ on $x$ is defined as $C(B,x) = \max_{A \in \text{supp}(b) } C(A,x)$, 
the maximum number of queries made on input $x$ by an algorithm in the support of $b$. 
Note how this definition differs from the one given for the zero-error case.
We define the \emph{$\varepsilon$-error randomized complexity of $f$} as $R_\varepsilon(f)\defeq \min_{B \in \mathcal{R}_{\varepsilon}(f)}\max_{x}C(B,x)$, and the {\em bounded-error randomized query complexity of $f$} as $R(f) \defeq R_{1/3}(f)$.
Note that this definition is valid only for $\varepsilon>0$ and does not coincide with $R_0(f)$ defined above for $\varepsilon=0$. Setting $\varepsilon=0$ in this definition simply gives us the deterministic query complexity $D(f)$. 
Nonetheless, it is true that $R(f) = O(R_0(f))$. This distinction is discussed in more detail in \sec{disc}. Lastly, note that for all $\varepsilon>0$, we have $R_\varepsilon(f) \leq D(f)$, and that there exist functions for which $R(f) = o(D(f))$~\cite{SW86}.


In order to establish lower bounds on randomized query complexity, it is useful to take a distributional view of randomized algorithms
\cite{Yao77}, that is to consider the performance of randomized algorithms on a chosen distribution over inputs. Let $\mu$ be a probability distribution over all possible inputs of length $n$, and let $B$ be a randomized decision tree algorithm.
The cost of $B$ under $\mu$ is $C(B, \mu) = \sum_{x\in \{0,1\}^n}   \mu(x)  C(A,x) = \mathbb{E}[C(B,x)]$. 
We define the 
\emph{$\varepsilon$-error distributional complexity of $f$ under $\mu$} as 
$\Delta_{\varepsilon}^{\mu}(f) \defeq \min_{B \in \mathcal{R}_{\varepsilon}(f)} C(B, \mu)$.
The following simple fact is the basis of many lower bound arguments.

\begin{proposition}
\label{prop:yao}
For every distribution $\mu$ over $\{0,1\}^n$, 
and for all $\varepsilon \geq 0,$
we have $\Delta_{\varepsilon}^{\mu}(f) \leq R_\varepsilon(f)$.
\end{proposition}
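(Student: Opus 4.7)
The plan is to observe that this is the easy direction of Yao's minimax principle: a uniform upper bound on worst-case cost automatically upper bounds any average cost. Fix any randomized decision tree $B \in \mathcal{R}_\varepsilon(f)$, specified by a distribution $b$ over $\mathcal{D}_n$. Since the definitions of cost differ between the zero-error case ($C(B,x) = \sum_{A} b(A) C(A,x)$) and the $\varepsilon>0$ case ($C(B,x) = \max_{A \in \mathrm{supp}(b)} C(A,x)$), I would handle both uniformly by first establishing the one common inequality that is the heart of the argument, namely $C(B,\mu) \leq \max_x C(B,x)$.

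To that end, I would write $C(B,\mu) = \sum_{x \in \{0,1\}^n} \mu(x)\, C(B,x)$, which by definition of $C(B,\mu)$ given in the paragraph above the proposition is a convex combination (with weights $\mu(x)$ summing to $1$) of the quantities $C(B,x)$. Any convex combination of a finite set of nonnegative reals is at most the maximum of that set, so
\[
C(B,\mu) \;\leq\; \max_{x \in \{0,1\}^n} C(B,x).
\]
This step uses nothing about the structure of $C(B,x)$ beyond its nonnegativity, so it applies uniformly in both the zero-error and positive-error regimes.

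Next I would take the minimum over $B \in \mathcal{R}_\varepsilon(f)$ on both sides. On the left this yields $\Delta_\varepsilon^\mu(f)$ by definition, while on the right it yields $R_\varepsilon(f)$ by definition. Since the inequality holds pointwise for every $B$, it is preserved under the minimum, giving $\Delta_\varepsilon^\mu(f) \leq R_\varepsilon(f)$ for all $\varepsilon \geq 0$.

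There is no real obstacle here; the only thing to be careful about is that the two definitions of $C(B,x)$ (expectation vs.\ supremum) are not used except through the fact that each is a well-defined nonnegative real number for every fixed $x$ and $B$. So the single inequality "average $\leq$ maximum" suffices, and the proof is essentially one line once the definitions are unpacked.
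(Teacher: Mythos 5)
Your argument is correct and is exactly the paper's proof: rewrite $C(B,\mu)$ as an expectation (convex combination) over $x$, bound it by $\max_x C(B,x)$, and take the minimum over $B \in \mathcal{R}_\varepsilon(f)$ on both sides. Your added remark that the inequality ``average $\leq$ maximum'' is agnostic to whether $C(B,x)$ is itself an expectation or a supremum is a nice clarification, but it is the same one-line argument the paper uses.
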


\begin{proof}
This follows by expanding out the definitions and using the simple inequality between expectation and maximum: 
\begin{equation}
\Delta_{\varepsilon}^{\mu}(f)  =  \min_{B \in \mathcal{R}_{\varepsilon}(f)} C(B, \mu) 
 =  \min_{B \in \mathcal{R}_{\varepsilon}(f)}  \mathbb{E}[C(B,x)]
 \leq \min_{B \in \mathcal{R}_{\varepsilon}(f)}  \max_x C(B,x)
  =  R_\varepsilon(f). \qedhere 
\end{equation}
\end{proof}

\subsection{Subcube partition complexity}

A subcube of the hypercube $\{0,1\}^n$ is a set of $n$-bit strings obtained by fixing the values of some subset of the variables. In other words, a subcube is the set of all inputs consistent with a partial assignment of $n$ bits. Formally, a \emph{partial assignment on $n$ variables} is a function $a:I_a \rightarrow \{0,1\}$, with $I_a \subseteq [n]$.
Given a partial assignment $a$, we call $S(a)=\{y \in \{0,1\}^n : y_i=a(i) \text{ for all } i \in I_a\}$ the \emph{subcube generated by $a$}. A set $S \subseteq \{0,1\}^n$ is a \emph{subcube} 
of the hypercube $\{0,1\}^n$ if $S=S(a)$ for some partial assignment on $n$ variables $a$. Clearly, for every subcube, there exists exactly one such $a$. We denote by $I_S$ the domain $I_a \subseteq [n]$ of $a$ where $S=S(a)$. 
%
For example, the set $\{0100, 0101, 0110, 0111\}$ is a subcube of $\{0,1\}^4$. It is generated by the partial assignment $a:\{1,2\} \to \{0,1\}$, where $a(1)=0$ and $a(2)=1$. An alternative representation of a partial assignment is by an $n$-bit string where a position $i$ takes the value $a(i)$ if $i \in I_a$ and takes the value $*$ otherwise. For this example, the subcube  $\{0100, 0101, 0110, 0111\}$ is generated by the partial assignment $01**$. Finally, another useful representation is in terms of a conjunction of literals, that is satisfied by all strings in the subcube. For example, the subcube $\{0100, 0101, 0110, 0111\}$ consists exactly of all $4$-bit strings that satisfy the formula $\overline{x_1} \wedge {x_2}$.

The subcube partition model of computation, studied  previously in \cite{FKW02,BOH90,CKLS13}, is a generalization of the decision tree model.
 A {\em partition}  $\{S_1, \ldots, S_{\ell}\}$ of $\{0,1\}^n$ is a set of pairwise disjoint subsets of $\{0,1\}^n$ that together cover the entire hypercube, that is $\bigcup_i S_i = \{0,1\}^n$ and  $S_i \cap S_j = \emptyset$ for $i\neq j$.

A \emph{deterministic subcube partition $P$ on $n$ variables} is a partition of $\{0,1\}^n$ with a Boolean value $s \in \{0,1\}$ associated to each subcube, that is $P=\{(S_1,s_1),(S_2,s_2),\dots,(S_{\ell},s_{\ell})\}$, where each $S_i$ is a subcube and $\{S_1, \ldots, S_{\ell}\}$ is a partition of $\{0,1\}^n$.
If the assignment $a$ generates $S_i$ for some $i$, we call $a$ a {\em generating assignment for $P$}. For any $x$, we let $S^x$ denote the subcube containing $x$, that is, if $x \in S_i$, then $S^x  \defeq S_i$. We  define the value $P(x)$ of $P$ on $x$ as $s_i$. 

We say that a deterministic subcube partition $P$ computes $f$ if $P(x)=f(x)$ for all $x$.
Note that every deterministic decision tree algorithm $A$ computing $f$ induces a subcube partition computing $f$ that consists of the subcubes generated by the partial assignments defined by the root--leaf paths of the tree and the Boolean values of the corresponding leaves. 
We define the cost of $P$ on $x$ as $C(P,x) \defeq |I_{S^x}|$, analogous to the number of queries made on input $x$ in query complexity. We define the worst-case cost as $C(P) \defeq  \max_x C(P,x)$. Let  $\mathcal{D}^\mathrm{sc}_n$ be the set of all deterministic subcube partitions on $n$ variables and let $\mathcal{D}^\mathrm{sc}(f) \subseteq \mathcal{D}^\mathrm{sc}_n$ be those partitions that compute $f$. We define the \emph{deterministic subcube partition complexity of $f$} as $\D(f)= \min_{P \in \mathcal{D}^\mathrm{sc}(f)} C(P)$. Deterministic subcube partition complexity also satisfies a composition theorem.

\begin{proposition}
\label{prop:dsc-comp}
For any $f : \{0,1\}^n \rightarrow \{0,1\}$  and $g : \{0,1\}^m \rightarrow \{0,1\}$, $\D(f\circ g) \leq \D(f)\D(g)$.
\end{proposition}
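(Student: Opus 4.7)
The plan is to lift optimal subcube partitions for $f$ and $g$ into one for $f \circ g$, mirroring how one composes decision trees. Fix subcube partitions $P_f$ computing $f$ with cost $\D(f)$ and $P_g$ computing $g$ with cost $\D(g)$. Viewing the input $y \in \{0,1\}^{nm}$ as $n$ blocks $y_{i\cdot} \in \{0,1\}^m$, every $y$ determines, for each $i$, the unique subcube $S_i(y)$ of $P_g$ containing $y_{i\cdot}$ (with label $s_i(y) = g(y_{i\cdot})$), and then the unique subcube $T(y)$ of $P_f$ containing $(s_1(y),\ldots,s_n(y))$. The key observation is that the blocks $y_{i\cdot}$ with $i \notin I_{T(y)}$ can be changed arbitrarily without affecting $f \circ g$, since such changes leave $T(y)$ and hence the $P_f$-label unchanged.

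Accordingly, I would define a subcube partition of $\{0,1\}^{nm}$ whose parts are indexed by ``descriptors'' of the form $(T, \{S_i\}_{i \in I_T})$, where $T$ is a subcube of $P_f$ and each $S_i$ is a subcube of $P_g$ whose $P_g$-label matches the $T$-assignment at $i$. The subcube associated with such a descriptor is $\prod_{i \in I_T} S_i \times \prod_{i \notin I_T} \{0,1\}^m$, carrying the label of $T$. Every $y$ lies in the subcube of its own descriptor $(T(y), \{S_i(y)\}_{i \in I_{T(y)}})$, so these subcubes cover $\{0,1\}^{nm}$. For disjointness, if $y$ lies in the subcubes of two descriptors $(T, \{S_i\})$ and $(T', \{S'_i\})$, then the forced values of $s_j(y)$ for $j \in I_T \cup I_{T'}$, together with $P_f$ being a partition, force $T = T'$; then $P_g$ being a partition forces $S_i = S_i(y) = S'_i$ for each $i \in I_T$. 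Monochromaticity holds because $P_f$ is constant on $T$ and the blocks outside $I_T$ are free.

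The cost of this partition is $\max_{(T,\{S_i\})} \sum_{i \in I_T} |I_{S_i}| \leq |I_T| \cdot \D(g) \leq \D(f)\,\D(g)$, since $|I_T| \leq \D(f)$ and $|I_{S_i}| \leq \D(g)$ by optimality of $P_f$ and $P_g$. Minimizing over partitions yields $\D(f \circ g) \leq \D(f)\,\D(g)$. The main care point is the disjointness verification, where one uses that \emph{both} $P_f$ and $P_g$ are partitions to recover the descriptor from any witness $y$; coverage, monochromaticity, and the cost bound are essentially by construction.
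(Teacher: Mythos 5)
Your proposal is correct and follows essentially the same construction as the paper: for each subcube $T$ of the outer partition $P_f$ and each choice of $P_g$-subcubes $\{S_i\}_{i\in I_T}$ whose labels match $T$'s fixed values, form the product subcube with label inherited from $T$. The paper compresses the verification of coverage, disjointness, and monochromaticity into ``it is easy to check,'' while you spell these out explicitly, but the underlying argument is the same.
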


\begin{proof}
Let $P = \{(S_1,s_1),(S_2,s_2),\dots,(S_p,s_p)\}$ and $Q = \{(T_1,t_1),\dots,(T_q,t_q)\}$ be optimal deterministic subcube partitions computing $f$ and $g$ respectively. Suppose that $S_h$ is generated by $a_h$ for $h \in [p]$, and that $T_j$ is generated by $b_j$ for $j \in [q]$.
Let $I_{a_h} = \{ i_1, \ldots , i_{c_h}\}$. 
We define the deterministic subcube partition $P \circ Q$ on $nm$ variables as follows. The generating assignments for $P \circ Q$
are $a_h \circ (b_{j_1}, \ldots, b_{j_{c_h}})$, for all $h \in [p]$, and  $ j_1, \ldots, j_{c_h} \in [q]$ that satisfy 
$a(i_k) = t_{j_k}$ for $k \in [c_h]$.
When $|I_{b_{j_k}}| = d_k$, the assignment 
$e = a_h \circ (b_{j_1}, \ldots, b_{j_{c_h}})$ is defined by $I_e = \{ (1,1), \ldots, (1, d_1), (2,1), \ldots , (c_h, d_{c_h})\}$, and $e(k,r) = b_{j_k}(r)$
for $ 1 \leq r \leq d_{k}.$
The Boolean value associated with $e$ is $s_h$. It is easy to check that $P \circ Q$ computes $f \circ g$ and that
$C(P \circ Q) \leq C(P) C(Q)$.
\end{proof}

As in the case of query complexity, we extend deterministic subcube complexity to the randomized setting.
A \emph{randomized subcube partition $R$ on $n$ variables} is given by a distribution 
$r$ over all deterministic subcube partitions on $n$ variables. 
As for randomized decision trees, $R(x)$ is a random variable and we say that $R$ computes $f$ with error $\varepsilon \geq 0$ if $\Pr[R(x)=f(x)] \geq 1-\varepsilon$ for all $x$. Let $\mathcal{R}^\mathrm{sc}_n$ be the set of all randomized subcube partitions over $n$ variables  and $\mathcal{R}^\mathrm{sc}_{\varepsilon}(f) \subseteq \mathcal{R}^\mathrm{sc}_n$ be the set of all randomized subcube partitions that compute $f$ with error $\varepsilon$. 

The cost of a zero-error randomized subcube partition $R$ on $x$ is defined by $C(R,x) = \mathbb{E}[C(P,x)]$,
where the expectation is taken over $R$. For an $\varepsilon$-error subsucbe  partition $R$, with $\varepsilon>0$, the 
cost on $x$ is $C(R,x) = \max_{P \in \text{supp}(r) }C(P,x)$.
For $\varepsilon \geq 0$, we define the \emph{$\varepsilon$-error randomized subcube complexity of $f$} by 
$\R_\varepsilon(f)= \min_{R \in \mathcal{R}^\mathrm{sc}_{\varepsilon}(f)}  \max_{x} C(R,x) $. 

As mentioned before, a deterministic decision tree induces a deterministic subcube partition with the same cost and thus a randomized decision tree induces a randomized subcube partition with the same cost, which yields the following.

\begin{proposition}
\label{prop:query-sc}
For an $n$-bit Boolean function $f:\{0,1\}^n \to \{0,1\}$, we have that $\D(f) \leq D(f)$ and, for all $\varepsilon \geq 0$, we have that $\R_\varepsilon(f) \leq R_\varepsilon(f)$.
\end{proposition}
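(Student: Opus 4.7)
The plan is to show that the mapping "decision tree $\mapsto$ induced subcube partition" already described in the excerpt is cost-preserving on every input, and then lift this to the randomized setting by applying it pointwise to the support of a randomized decision tree.

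First, I would handle the deterministic inequality $\D(f) \leq D(f)$. Given an optimal deterministic decision tree $A \in \mathcal{D}(f)$ with $C(A) = D(f)$, for each leaf $\ell$ of $A$ let $a_\ell$ be the partial assignment recording, for each internal node on the root-to-$\ell$ path, the queried index together with the bit dictated by the outgoing edge. The subcube $S(a_\ell)$ is exactly the set of inputs on which $A$ follows that path. Since each input $x$ follows a unique root-to-leaf path, the family $\{S(a_\ell)\}_\ell$ is a partition of $\{0,1\}^n$. Pairing each subcube with the Boolean label of its leaf yields a deterministic subcube partition $P_A$, and $P_A$ computes $f$ because $A$ does. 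Moreover, for every $x$ with path ending at leaf $\ell(x)$, we have $|I_{a_{\ell(x)}}| = C(A,x)$, so $C(P_A,x) = C(A,x)$ and hence $C(P_A) = C(A) = D(f)$, giving $\D(f) \leq D(f)$.

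Next, for $\R_\varepsilon(f) \leq R_\varepsilon(f)$, I would use the same mapping $A \mapsto P_A$ at the level of distributions. Given an optimal $B \in \mathcal{R}_\varepsilon(f)$ defined by a distribution $b$ over $\mathcal{D}_n$, define the randomized subcube partition $R_B$ by the distribution $r$ that assigns probability $b(A)$ to $P_A$ (aggregating over collisions if several $A$'s induce the same $P_A$). Since $P_A(x) = A(x)$ for every $x$, we get $\Pr[R_B(x) = f(x)] = \Pr[B(x) = f(x)] \geq 1 - \varepsilon$, so $R_B \in \mathcal{R}^\mathrm{sc}_\varepsilon(f)$. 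It remains to compare costs on each $x$.

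The step that requires a tiny bit of care, and which is the main (minor) obstacle, is that the cost definitions for zero-error and bounded-error are different (expectation versus maximum over the support). In the zero-error case ($\varepsilon = 0$), $C(R_B,x) = \mathbb{E}_{P \sim r}[C(P,x)] = \sum_A b(A) \, C(P_A,x) = \sum_A b(A) \, C(A,x) = C(B,x)$. In the bounded-error case ($\varepsilon > 0$), $C(R_B,x) = \max_{P \in \mathrm{supp}(r)} C(P,x) = \max_{A \in \mathrm{supp}(b)} C(P_A,x) = \max_{A \in \mathrm{supp}(b)} C(A,x) = C(B,x)$, where we use $\mathrm{supp}(r) = \{P_A : A \in \mathrm{supp}(b)\}$. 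In either case, taking the max over $x$ yields $\max_x C(R_B,x) = \max_x C(B,x) = R_\varepsilon(f)$, so $\R_\varepsilon(f) \leq R_\varepsilon(f)$, completing the proof.
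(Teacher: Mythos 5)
Your proposal is correct and is essentially the argument the paper intends: it fleshes out the paper's one-line observation that the root--leaf paths of a decision tree induce a subcube partition of identical cost, and then pushes the distribution forward through the map $A \mapsto P_A$, carefully noting that this preserves both error and cost whether cost is measured as an expectation ($\varepsilon=0$) or a maximum over the support ($\varepsilon>0$). No gaps.
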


\subsection{Partition bounds}
In 2010, Jain and Klauck \cite{JK10} introduced a linear programming based lower bound technique for randomized query complexity called the partition bound. They showed that it subsumes all known general lower bound methods for randomized query complexity, including approximate polynomial degree \cite{NS94}, block sensitivity \cite{N91}, randomized certificate complexity \cite{A06}, and the classical adversary bound \cite{LM08,SS06,A08}. 

Recently, Jain, Lee, and Vishnoi \cite{JLV14} presented a modification of this method called the public-coin partition bound, which is easily seen to be stronger than the partition bound. Furthermore, they were able to show that the gap between this new lower bound and randomized query complexity can be at most quadratic. 
We define these lower bounds formally.

\begin{definition}[Partition bound]
\label{def:prt}
Let $f:\{0,1\}^n \to \{0,1\}$ be an $n$-bit Boolean function and let $\mathcal{S}_n$ denote the set of all subcubes of $\{0,1\}^n$. Then, for any $\varepsilon \geq 0$, let $\text{prt}_\varepsilon(f)$ be the optimal value of the following linear program:
\begin{align}
\mathop{\textrm{minimize: }}_{w_{S,z}} & \sum_{z=0}^{1} \sum_{S \in \mathcal{S}_n} w_{S,z} \cdot 2^{|I_S|} \\
\text{subject to: } 
&\sum_{S:x \in S} w_{S,f(x)} \geq 1-\varepsilon \qquad (\textrm{for all }x \in \{0,1\}^n),  \\
&\sum_{S:x \in S} \sum_{z=0}^1 w_{S,z} = 1  \qquad (\textrm{for all }x \in \{0,1\}^n),  \\
&w_{S,z} \geq 0 \qquad \textrm{(for all }S \in \mathcal{S}_n \text{ and } z \in \{0,1\}).
\end{align}
The  {\em $\varepsilon$-partition bound} of $f$ is defined as $\text{PRT}_\varepsilon(f) \defeq \frac{1}{2}\log_2(\text{prt}_\varepsilon(f))$.
\end{definition}

We now define the public-coin partition bound. Although our definition differs from the original definition \cite{JLV14}, it is not too hard to see that they are equivalent. Before presenting the definition, recall that $\mathcal{D}^\textrm{sc}_n$ is the set of deterministic subcube partitions on $n$ variables, and $\mathcal{R}^\textrm{sc}_\varepsilon(f)$ is the set of randomized subcube partitions that compute $f$ with error at most $\varepsilon \geq 0$. For a randomized subcube partition $R \in \mathcal{R}^\textrm{sc}_\varepsilon(f)$, we let $r$ be the probability distribution over deterministic subcube partitions corresponding to $R$.

\begin{definition}[Public-coin partition bound]
\label{def:pprt}
Let $f:\{0,1\}^n \to \{0,1\}$ be an $n$-bit Boolean function. Then, for any $\varepsilon \geq 0$, let $\text{pprt}_\varepsilon(f)$ be the optimal value of the following linear program:
\begin{align}
\mathop{\textrm{minimize: }}_{R}  
& \sum_{z=0}^1 \sum_{S \in \mathcal{S}_n} \sum_{P:(S,z) \in P} r(P) \cdot 2^{|I_S|} \\
\textrm{subject to: } & R \in \mathcal{R}^\textrm{sc}_{\varepsilon}(f).
\end{align}
The \emph{$\varepsilon$-public-coin partition bound of $f$} is defined as $\text{PPRT}_\varepsilon(f) \defeq \frac{1}{2}\log_2(\text{pprt}_\varepsilon(f))$.
\end{definition}

Using the original definition, it is trivial that $\text{prt}_\varepsilon(f) \leq \text{pprt}_\varepsilon(f)$, since the public-coin partition bound is defined using the same linear program, with additional constraints. This statement also holds with the definitions given above, as we now prove.

\begin{restatable}{proposition}{prt}
For any 
Boolean function $f$ and for all $\varepsilon \geq 0$, we have that $\mathrm{prt}_\varepsilon(f) \leq \mathrm{pprt}_\varepsilon(f)$.
\end{restatable}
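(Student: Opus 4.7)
The plan is to exhibit, for any feasible randomized subcube partition $R \in \mathcal{R}^\mathrm{sc}_{\varepsilon}(f)$, a feasible point $\{w_{S,z}\}$ of the partition-bound LP whose objective value equals the public-coin objective attained by $R$. Taking the infimum over $R$ then yields $\mathrm{prt}_\varepsilon(f) \leq \mathrm{pprt}_\varepsilon(f)$.

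The natural construction is to ``marginalize'' the distribution $r$ associated with $R$: for every subcube $S \in \mathcal{S}_n$ and every $z \in \{0,1\}$, set
\[
w_{S,z} \defeq \sum_{P \in \mathcal{D}^\mathrm{sc}_n : (S,z) \in P} r(P).
\]
With this choice, the partition-bound objective $\sum_{z}\sum_{S} w_{S,z}\cdot 2^{|I_S|}$ matches the public-coin objective $\sum_{z}\sum_{S}\sum_{P:(S,z)\in P} r(P)\cdot 2^{|I_S|}$ term by term.

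Next I would verify the three constraints of the partition-bound LP. Non-negativity is immediate since $r\geq 0$. For the normalization $\sum_{S\ni x}\sum_{z} w_{S,z} = 1$, I would swap the order of summation: every deterministic subcube partition $P$ contains exactly one subcube $S^x$ covering $x$, and that subcube carries a unique label $z$, so each $P$ contributes $r(P)$ exactly once, giving $\sum_P r(P) = 1$. For the correctness constraint, the same swap gives
\[
\sum_{S \ni x} w_{S,f(x)} \;=\; \sum_{P} r(P)\cdot \mathbf{1}[P(x)=f(x)] \;=\; \Pr[R(x)=f(x)] \;\geq\; 1-\varepsilon,
\]
using the hypothesis $R \in \mathcal{R}^\mathrm{sc}_\varepsilon(f)$.

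Thus the constructed $w$ is feasible for the partition-bound LP and attains the same value as $R$ in the public-coin LP, which establishes the inequality. There is no real obstacle here: the whole argument is essentially unpacking definitions and observing that, because every deterministic subcube partition is already a partition of $\{0,1\}^n$, the marginalization step automatically produces weights satisfying the equality constraint. The only mild care needed is in the bookkeeping when exchanging the sums over $S$ and over $P$.
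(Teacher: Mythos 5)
Your proof is correct and follows the same route as the paper: marginalize the distribution $r$ to obtain $w_{S,z}=\sum_{P:(S,z)\in P} r(P)$, check feasibility of the partition-bound LP, and observe the objectives coincide. You merely spell out the constraint verification (normalization and correctness) that the paper leaves implicit.
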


\begin{proof}
Let $R'$ be a randomized subcube partition achieving the optimal value for the linear program of $\text{pprt}_\varepsilon(f)$ and $r'$ be the corresponding probability distribution over deterministic subcube partitions. Then, for all $(S,z)$ where $S$ is a subcube and $z\in \{0,1\}$, let 
\begin{equation}
w'_{S,z}=\displaystyle \sum_{P:(S,z) \in P} r'(P).
\end{equation}
This family of variables satisfies the conditions of the $\text{pprt}_\varepsilon(f)$ linear program and is such that
\begin{equation}
\sum_{z=0}^{1} \sum_{S \in \mathcal{S}_n} w'_{S,z} \cdot 2^{|I_S|} = 
 \sum_{z=0}^1 \sum_{S \in \mathcal{S}_n} \sum_{P:(S,z) \in P} r'(P) \cdot 2^{|I_S|}\thinspace.
\end{equation}
\end{proof}

Recall that both partition bounds lower bound randomized query complexity, as shown in \cite{JLV14}. In particular, for all $\varepsilon > 0$, $\text{PRT}_\varepsilon(f) \leq \text{PPRT}_\varepsilon(f) \leq R_\varepsilon(f)$ and, when $\varepsilon=0$, we have that $\text{PRT}_0(f) \leq \text{PPRT}_0(f) \leq D(f)$. It is not known if the zero-error partition bound also lower bounds zero-error randomized query complexity. However, as mentioned, the partition bounds also lower bound subcube partition complexity, which implies that they lower bound query complexity. The proof for query complexity easily extends to subcube partition complexity.

\begin{restatable}{proposition}{pprtrsc}
\label{prop:pprt-rsc}
For every Boolean function $f$ and for all $\varepsilon > 0$, we have that
$\mathrm{PPRT}_\varepsilon(f)   \leq \R_\varepsilon(f)$ and $\mathrm{PPRT}_0(f) \leq \D(f)$.
\end{restatable}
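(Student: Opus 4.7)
My plan is to plug an optimal (randomized or deterministic) subcube partition directly into the PPRT linear program and then upper-bound the resulting objective. For $\varepsilon>0$, I would take $R\in\mathcal{R}^{\mathrm{sc}}_\varepsilon(f)$ achieving $\max_x C(R,x)=\R_\varepsilon(f)$; since the bounded-error cost is $C(R,x)=\max_{P\in\mathrm{supp}(r)} C(P,x)$, every deterministic partition $P$ in the support automatically satisfies $C(P)=\max_x C(P,x)\le \R_\varepsilon(f)$. For $\varepsilon=0$, I would instead take $R$ concentrated on a single optimal $P^\ast\in\mathcal{D}^{\mathrm{sc}}(f)$ with $C(P^\ast)=\D(f)$; this $R$ lies in $\mathcal{R}^{\mathrm{sc}}_0(f)$ and so is feasible for the relevant LP.

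Having a feasible $R$, I can upper-bound $\mathrm{pprt}_\varepsilon(f)$ by the LP's objective at $R$. Swapping the order of summation rewrites this as
$$\sum_{z=0}^{1}\sum_{S\in\mathcal{S}_n}\sum_{P:(S,z)\in P} r(P)\cdot 2^{|I_S|} \;=\; \sum_{P} r(P)\sum_{S\in P} 2^{|I_S|},$$
so it suffices to bound the inner sum for each deterministic partition $P$. The key step is to use the fact that the subcubes of $P$ tile $\{0,1\}^n$, which gives the volume identity $\sum_{S\in P} 2^{-|I_S|}=1$. Combined with $|I_S|\le C(P)$ for every $S\in P$, this lets me write
$$\sum_{S\in P} 2^{|I_S|} \;=\; \sum_{S\in P} 2^{2|I_S|}\cdot 2^{-|I_S|} \;\le\; 2^{2C(P)}\sum_{S\in P}2^{-|I_S|} \;=\; 2^{2C(P)}.$$

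Plugging this in gives $\mathrm{pprt}_\varepsilon(f)\le \sum_P r(P)\cdot 2^{2C(P)}\le 2^{2\R_\varepsilon(f)}$ in the bounded-error case, and $\mathrm{pprt}_0(f)\le 2^{2\D(f)}$ in the zero-error case; taking $\tfrac12\log_2$ of both sides then yields the two claimed inequalities. The only conceptually delicate point, and the reason the argument only yields $\mathrm{PPRT}_0(f)\le\D(f)$ rather than $\mathrm{PPRT}_0(f)\le\Rz(f)$, is the asymmetric cost convention: the max-over-support cost used for $\varepsilon>0$ is exactly what forces each $P$ in the support to have small worst-case cost, whereas under the expectation-based zero-error cost individual $P$'s in the support could have arbitrarily large $C(P)$, making the volume argument above fail.
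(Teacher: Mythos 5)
Your proof is correct and takes essentially the same approach as the paper: plug the optimal subcube partition into the PPRT linear program, swap the order of summation, and use the tiling identity $\sum_{S\in P} 2^{-|I_S|}=1$ together with $|I_S|\le C(P)$ to bound each partition's contribution by $2^{2C(P)}$. The paper organizes the inner bound slightly differently (first deriving $|P|\le 2^{C(P)}$, then multiplying by $\max_S 2^{|I_S|}\le 2^{C(P)}$, rather than your single-chain inequality), but the two calculations are interchangeable, and your remark about the cost-convention asymmetry in the $\varepsilon=0$ case matches the paper's reasoning for why the zero-error bound is stated against $\D(f)$.
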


\begin{proof}
Let $R' \in \mathcal{R}^\mathrm{sc}_\varepsilon(f)$ be a randomized subcube partition that achieves $\R_\varepsilon(f)$ and let $r'$ be its corresponding probability distribution over deterministic subcube partitions. Let $P \in \text{supp}(r')$. 
By definition, for every $(S,z) \in P$, we have that $|I_S|  \leq  C(P)$. Also by definition, $C(P) \leq   \R_\varepsilon(f)$.
Furthermore, if $P=\{(S_1,z_1),(S_2,z_2),\dots,(S_m,z_m)\}$, then 
\begin{equation}
|P| \cdot 2^{n-C(P)} = m \cdot 2^{n-C(P)} \leq \displaystyle \sum_{i=1}^m 2^{n-|I_{S_i}|} = 2^n.
\end{equation}
This implies that $|P| \leq 2^{C(P)} \leq 2^{\R_\varepsilon(f)}$ and, therefore, that
\begin{align}
\text{pprt}_\varepsilon(f) & = \sum_{z=0}^{1} \sum_{S \in \mathcal{S}_n} \sum_{P:(S,z) \in P} r'(P) \cdot 2^{|I_S|} 
\ \leq \  2^{\R_\varepsilon(f)}\sum_{z=0}^{1} \sum_{S \in \mathcal{S}_n} \sum_{P:(S,z) \in P} r'(P) \\
& = 2^{\R_\varepsilon(f)} \sum_{P \in \text{supp}(r')} r'(P) \cdot |P| 
\ \leq \ 2^{\R_\varepsilon(f)} \cdot 2^{\R_\varepsilon(f)} \sum_{P \in \text{supp}(r')}    r'(P) \\
&  = 2^{2\R_\varepsilon(f)}.
\end{align}
The first inequality holds since $|I_S| \leq  \R_\varepsilon(f)$, and the second inequality uses the fact that
$|P| \leq 2^{\R_\varepsilon(f)}$. Setting $\varepsilon=0$ gives $\mathrm{PPRT}_0(f) \leq \D(f)$.
\end{proof}

The following theorem summarizes the known relations between the introduced complexity measures.

\begin{figure}[htb]
    \centering
   \begin{tikzpicture}[x=1cm,y=0.9cm]

     \node (Df) at(2,6) [align=center]{$D(f)$};
     \node (Dscf) at(2,4) [align=center]{$\D(f)$};
     \node (Rf) at(0,5) [align=center]{$R_\varepsilon(f)$};
     \node (Rscf) at(0,3) [align=center]{$\R_\varepsilon(f)$};
     \node (pprt) at(0,1) [align=center]{$\mathrm{PPRT}_\varepsilon(f)$};
     \node (pprt0) at(2,2) [align=center]{$\mathrm{PPRT}_0(f)$};
     \node (R0f) at(4,5) [align=center]{$R_0(f)$};
     \node (R0scf) at(4,3) [align=center]{$\R_0(f)$};

     \path[->] (Rf) edge (Df);
     \path[->] (R0f) edge (Df);
     \path[->] (Dscf) edge (Df);
     \path[->] (Rscf) edge (Rf);
     \path[->] (R0scf) edge (R0f);
     \path[->] (Rscf) edge (Dscf);
     \path[->] (R0scf) edge (Dscf);
     \path[->] (pprt) edge (Rscf);
     \path[->] (pprt0) edge (Dscf);
     \path[->] (pprt) edge (pprt0);

   \end{tikzpicture}
    \caption{Relationships between the complexity measures introduced. An arrow from $X$ to $Y$ represents $X \leq Y$. For example, $\D(f) \rightarrow D(f)$ means $\D(f) \leq D(f)$.\label{fig:rel}}
\end{figure}
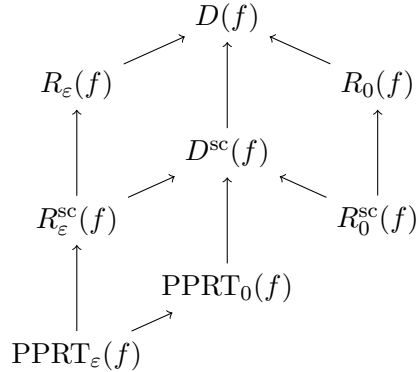

\begin{restatable}{theorem}{summary}
\label{thm:summary}
For any Boolean function $f:\{0,1\}^n \to \{0,1\}$ and for all $\varepsilon > 0$, the relations indicated in \fig{rel} hold.
\end{restatable}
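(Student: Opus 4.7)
The proposal is that this theorem is essentially a bookkeeping statement: nearly every arrow in \fig{rel} has already been established or is immediate from the definitions, so the plan is simply to enumerate the arrows and point to the right justification for each, introducing a small extra argument only for the one inequality that has not been stated yet.

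First, I would dispose of the arrows whose tail lives in the deterministic world. The three arrows $\D(f)\to D(f)$, $\R_\varepsilon(f)\to R_\varepsilon(f)$, and $\R_0(f)\to R_0(f)$ are exactly the content of \prop{query-sc}, since a (randomized) decision tree induces a (randomized) subcube partition of equal cost. The arrows $R_\varepsilon(f)\to D(f)$, $R_0(f)\to D(f)$, $\R_\varepsilon(f)\to \D(f)$, and $\R_0(f)\to \D(f)$ all follow from the trivial observation that a deterministic decision tree (respectively, subcube partition) is a randomized one whose distribution is a point mass; the cost inequalities are immediate either from the definition of $R_\varepsilon$/$\R_\varepsilon$ (worst case over the support coincides with the deterministic cost) or of $R_0$/$\R_0$ (expectation over a point mass).

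Next, the two arrows $\mathrm{PPRT}_\varepsilon(f)\to \R_\varepsilon(f)$ (for $\varepsilon > 0$) and $\mathrm{PPRT}_0(f)\to \D(f)$ are exactly \prop{pprt-rsc}. That leaves a single arrow, $\mathrm{PPRT}_\varepsilon(f)\to \mathrm{PPRT}_0(f)$, which has not been proved yet. For this I would simply observe that the LP defining $\mathrm{pprt}_\varepsilon$ in \hyperref[def:pprt]{Definition~\ref*{def:pprt}} is a relaxation of the one defining $\mathrm{pprt}_0$: increasing $\varepsilon$ from $0$ to a positive value only weakens the correctness constraint $R\in \mathcal{R}^{\mathrm{sc}}_\varepsilon(f)$, so any $R$ that is feasible for $\mathrm{pprt}_0$ is also feasible for $\mathrm{pprt}_\varepsilon$. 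Hence $\mathrm{pprt}_\varepsilon(f)\leq \mathrm{pprt}_0(f)$, and taking $\tfrac{1}{2}\log_2$ of both sides preserves the inequality.

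Concretely, the proof would consist of one sentence per arrow of \fig{rel} giving either a reference to \prop{query-sc}, \prop{pprt-rsc}, or the trivial embedding of deterministic objects into randomized ones, plus a short paragraph for the monotonicity of $\mathrm{pprt}_\varepsilon$ in $\varepsilon$. I do not expect any real obstacle here: the only place where even a sentence of mathematical content is required is the monotonicity claim, and that is immediate from inspecting which constraints of the LP depend on $\varepsilon$.
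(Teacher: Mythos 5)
Your proposal is correct and matches the paper's argument: the paper likewise dispatches the three query-vs-subcube arrows via \prop{query-sc}, the two partition-bound arrows via \prop{pprt-rsc}, and declares the remaining arrows (including $\mathrm{PPRT}_\varepsilon(f)\leq\mathrm{PPRT}_0(f)$ and the deterministic-to-randomized ones) ``immediate from the definitions,'' which is exactly the point-mass embedding and LP-relaxation monotonicity you spelled out.
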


\begin{proof}
The upper three vertical arrows represent the relations between query complexity and subcube partition complexity established in \prop{query-sc}. The remaining vertical arrows represent the relations between the public-coin partition bounds and subcube partition complexity established in \prop{pprt-rsc}. The other inequalities are immediate and follow from their definitions.
\end{proof}

\section{Iterated quaternary majority function}
\label{sec:it-fmaj}

We now introduce the function we use to separate randomized query complexity from  subcube partition complexity and establish some of its properties.


Let $\MAJ$ denote the Boolean majority function of its input bits when the number of bits is odd. The quaternary majority function
$\FMAJ : \{0,1\}^4 \rightarrow \{0,1\}$ is defined by $\FMAJ(x_1,x_2,x_3,x_4)=x_1(x_2 \vee x_3 \vee x_4)\vee x_2 x_3 x_4$. This function was introduced in \cite{Sav02}. We call it $\FMAJ$, because the output of the function is the majority of its input bits, with
the first variable  breaking equality in its favor. In other words, the first variable has two votes, while the others have one, that is
$\FMAJ(x_1,x_2,x_3,x_4) = \MAJ(x_1,x_1,x_2,x_3,x_4)$.
This function has previously been used to separate deterministic decision tree size from deterministic subcube partition size \cite{Sav02}. We use this function because its subcube partition complexity is smaller than its query complexity.

\begin{restatable}{proposition}{fmajdet}
\label{prop:fmaj-det}
We have $\D(\FMAJ) = 3$ and $D(\FMAJ)=4$.
\end{restatable}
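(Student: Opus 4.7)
The plan is to establish the two claimed equalities separately, since each one has a different flavor.

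For $D(\FMAJ) = 4$, the upper bound is trivial (query every variable), so the real work is showing $D(\FMAJ) \geq 4$. The plan is an adversary argument that case-splits on the first query. If the tree first queries $x_1$, the adversary answers $0$, leaving the subfunction $\MAJ(x_2,x_3,x_4)$, or $1$, leaving $x_2 \vee x_3 \vee x_4$; both are classical $3$-variable functions of deterministic query complexity $3$. If the first query is some $x_i$ with $i\neq 1$, the symmetry of $\FMAJ$ in its last three arguments lets us assume $i=2$; then the adversary answers $0$ to leave $x_1\wedge(x_3\vee x_4)$ or $1$ to leave $x_1\vee x_3 x_4$. A short case analysis on the next query shows each of these three-variable restrictions has deterministic query complexity $3$, so one query plus three more gives the bound of $4$.

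For $\D(\FMAJ)\leq 3$, the plan is to exhibit an explicit monochromatic subcube partition using at most three fixed coordinates per subcube. A natural choice is the eight subcubes
\[
\{\,{*}000,\ 00{*}1,\ 010{*},\ 0{*}10\,\}\ \text{(all of value }0\text{)},\quad \{\,{*}111,\ 110{*},\ 10{*}1,\ 1{*}10\,\}\ \text{(all of value }1\text{)}.
\]
The subcubes ${*}000$ and ${*}111$ exploit the fact that $\FMAJ$ is unchanged by flipping $x_1$ at these corners, while the other six are confined to one side of $x_1$ and each fixes three coordinates. Direct enumeration over the $16$ inputs confirms disjointness, full coverage, and monochromaticity.

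For $\D(\FMAJ)\geq 3$, the plan is a contradiction argument. If every subcube of some valid partition had at most two fixed variables, each would contain at least $2^{4-2}=4$ inputs and would need to be monochromatic. An exhaustive check over the $24$ two-fixed subcubes shows that only those fixing $x_1$ can be monochromatic, giving three value-$0$ subcubes $00{**},\ 0{*}0{*},\ 0{**}0$ and three value-$1$ subcubes $11{**},\ 1{*}1{*},\ 1{**}1$; every other two-fixed subcube contains inputs of both values. The three value-$0$ subcubes all share the input $0000$, and the three value-$1$ subcubes all share $1111$, so at most one of each color can appear in a disjoint partition, covering only $8$ of the $16$ inputs. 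This contradicts covering the full hypercube, giving $\D(\FMAJ)\geq 3$.

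The main obstacle is not any conceptual depth but managing the finite enumerations without slipping: verifying that each of the four three-variable restrictions in the adversary argument actually requires three queries, checking monochromaticity and disjointness of the explicit eight-subcube partition, and carrying out the complete enumeration of monochromatic two-fixed subcubes together with their pairwise intersections for the final lower bound. Each step is routine but error-prone, so I would write out the truth tables and intersection arguments explicitly.
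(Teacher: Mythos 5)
Your proof is correct in its essentials and follows the same high-level strategy as the paper: an explicit eight-subcube partition for $\D(\FMAJ)\leq 3$, a trivial upper bound for $D(\FMAJ)\leq 4$, and finite case analyses for the two lower bounds. Your partition ($*000$, $00{*}1$, $010{*}$, $0{*}10$, $*111$, $110{*}$, $10{*}1$, $1{*}10$) differs in detail from the paper's ($*000$, $001{*}$, $0{*}01$, $01{*}0$, $*111$, $110{*}$, $1{*}10$, $10{*}1$), but it is equally valid, covering all sixteen inputs monochromatically with three fixed coordinates per cell. Where you add value is in actually carrying out the lower-bound arguments that the paper waves off to ``enumeration'' or defers to the general bound of the next section: your $\D(\FMAJ)\geq 3$ argument (only $00{**}$, $0{*}0{*}$, $0{**}0$ and $11{**}$, $1{*}1{*}$, $1{**}1$ are monochromatic two-fixed subcubes, and they all collide at $0000$ or $1111$) is a complete and correct enumeration, and the adversary argument for $D(\FMAJ)\geq 4$ is the right tool.

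There is, however, one slip that you should fix. On the branch where the first query is $x_1$ and the adversary answers $0$, the restricted function is $\FMAJ(0,x_2,x_3,x_4)=x_2\wedge x_3\wedge x_4$, not $\MAJ(x_2,x_3,x_4)$: with $x_1$ fixed to $0$, its two ``votes'' go to $0$, so the output is $1$ only if all three remaining bits are $1$. (Dually, $\FMAJ(1,x_2,x_3,x_4)=x_2\vee x_3\vee x_4$, which you stated correctly.) This mislabeling does not damage the conclusion, since the three-bit AND also has deterministic query complexity $3$, but as written the stated restriction is wrong and should be corrected to AND. The two restrictions in the $i\neq 1$ case, namely $x_1\wedge(x_3\vee x_4)$ and $x_1\vee(x_3\wedge x_4)$, are correctly identified, and a short adversary argument on each does indeed yield query complexity $3$.
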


\begin{proof}
Observe that, for any choice of $w \in \{0,1\}$, we have that
\begin{align*}
 \FMAJ(0,0,1,w) = \FMAJ(0,w,0,1) & = \FMAJ(0,1,w,0) = \FMAJ(w,0,0,0) = 0 
   \\
   \text{ and that } \\
 \FMAJ(1,1,0,w) = \FMAJ(1,w,1,0) & = \FMAJ(1,0,w,1) = \FMAJ(w,1,1,1)=1.
\end{align*} 
The subcubes generated by these 8 partial assignments are disjoint and of size two, forming a partition of $\{0,1\}^4$.  Thus, with the right Boolean values, they form a deterministic subcube partition that computes $\FMAJ$.
Since all partial assignments have length 3, $\D(\FMAJ) \leq 3$. 
Although we do not use the inequality $\D(\FMAJ) \geq 3$ in our results, this can be verified by enumerating all deterministic subcube partitions with complexity 2. Furthermore, $D(\FMAJ)\leq 4$ since any function can be computed by querying all input bits. $D(\FMAJ)\geq 4$ can be shown either by enumerating all decision trees that make 3 queries or by using the lower bound in the next section.
\end{proof}

While our results only require us to show lower bounds on the randomized query complexity of $\FMAJ$, we want to mention that the randomized query complexity of $\FMAJ$ is indeed smaller than its deterministic query complexity.

\begin{restatable}{proposition}{fmajrand}
\label{prop:fmaj-rand}
For the $\FMAJ$ function, $R_0(\FMAJ) \leq 13/4 = 3.25$.
\end{restatable}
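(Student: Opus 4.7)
The plan is to build a zero-error randomized algorithm with expected cost at most $13/4$ on every input, obtained as a $(3/4,1/4)$-convex combination of two sub-algorithms $H$ and $A$. Sub-algorithm $H$ samples a uniformly random permutation $(y_1,y_2,y_3)$ of $(x_2,x_3,x_4)$ and queries $y_1,y_2$: if $y_1\neq y_2$, the identity $\FMAJ = x_1(y_1\vee y_2\vee y_3)\vee y_1 y_2 y_3$ collapses to $\FMAJ=x_1$ (since $y_1\vee y_2=1$ and $y_1 y_2=0$), so $H$ queries $x_1$ and outputs it; otherwise $y_1=y_2=v$, and $H$ queries $y_3$ — if $y_3=v$, output $v$ (all three $y_i=v$ forces $\FMAJ=v$ regardless of $x_1$), else query $x_1$ and output it (again $\FMAJ=x_1$ by the same algebra). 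Sub-algorithm $A$ queries $x_1$ first, then $(x_2,x_3,x_4)$ in uniformly random order, stopping upon the first $y_i=1$ if $x_1=1$ (output $1$), or after all three $y_i=0$ (output $0$), and dually for $x_1=0$. Both are zero-error by inspection.

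By the $S_3$-symmetry of $\FMAJ$ in $\{x_2,x_3,x_4\}$, expected costs depend only on the type $(x_1,s)$ with $s=x_2+x_3+x_4$. A short computation of the expected position of the first $1$ (or first $0$) in a length-$3$ random permutation gives the cost of $A$ as $(4,3,7/3,2)$ on types $(1,0),(1,1),(1,2),(1,3)$, with the symmetric values $(2,7/3,3,4)$ for $x_1=0$. For $H$: when $s\in\{0,3\}$ all $y_i$ coincide and $H$ halts in $3$ queries; when $s\in\{1,2\}$, two uniformly chosen $y_i$'s match with probability $\binom{2}{2}/\binom{3}{2}=1/3$ (the outlier being forced into slot $y_3$), giving expected cost $(2/3)(3)+(1/3)(4)=10/3$. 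Crucially, $H$'s expected cost depends only on $s$, not on $x_1$.

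Taking the $(3/4,1/4)$-mixture, the expected cost is $\tfrac{3}{4}(3)+\tfrac{1}{4}(4)=13/4$ on types $(1,0)$ and $(0,3)$; $\tfrac{3}{4}(10/3)+\tfrac{1}{4}(3)=13/4$ on types $(1,1)$ and $(0,2)$; $11/4$ on $(0,0)$ and $(1,3)$; and $37/12<13/4$ on $(0,1)$ and $(1,2)$. The worst case is exactly $13/4$, establishing the proposition. The only non-routine step is the choice of mixing weights: $H$'s bottleneck sits at the mixed-$y$ types $(\ast,1),(\ast,2)$ (cost $10/3$), while $A$'s bottleneck sits at the ``opposing'' types $(1,0),(0,3)$ (cost $4$), and the weights $(3/4,1/4)$ are precisely those that simultaneously equalize both tradeoffs at $13/4$.
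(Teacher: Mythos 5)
Your proposal is correct and is essentially the paper's own algorithm: the same $(3/4,1/4)$ mixture of the same two sub-algorithms (one that first tests whether $x_2,x_3,x_4$ agree and falls back on $x_1$, and one that queries $x_1$ first and then scans $x_2,x_3,x_4$). The paper states the algorithm in one sentence without the cost analysis; you make explicit the (necessary) uniform randomization of the order in which $x_2,x_3,x_4$ are queried and carry out the expected-cost computation by type, which is exactly what is needed to verify the bound.
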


\begin{proof}
The randomized algorithm achieving this complexity is simple: with probability 1/4, the algorithm queries the first variable and then it checks if the other variables all have the opposite value; with probability 3/4, it checks if the last three variables have all the same value and, if not, it queries the first variable.
\end{proof}


Since the $\FMAJ$ function separates deterministic subcube complexity from deterministic query complexity, a natural candidate for a function family that separates these measures is the iterated quaternary majority function, $\FMAJ_h$, defined recursively on $4^h$ variables, for $h \geq 0$. In the base case, $\FMAJ_0$ is the identity function on one bit. For $h>0$, we define $\FMAJ_h = \FMAJ \circ \FMAJ_{h-1}$.
In other words, for $h >0$, let $x$ be an input of length $4^h$, and for $i \in \{1,2,3,4\}$, let $x^{(i)}$ denote the $i^{th}$ quarter of $x$,
that is $|x^{(i)}| = 4^{h-1}$ and $x = x^{(1)} x^{(2)} x^{(3)} x^{(4)}$.
Then, we have that
$\FMAJ_h(x) = \FMAJ ( \FMAJ_{h-1} (x^{(1)})$, $ \FMAJ_{h-1} (x^{(2)})$, $ \FMAJ_{h-1} (x^{(3)})$, $ \FMAJ_{h-1} (x^{(4)}) ).$

The function $\FMAJ_h$ inherits several properties from $\FMAJ$. It has low deterministic subcube complexity, but high deterministic query complexity:

\begin{proposition}
\label{prop:dsc-h}
For all $h \geq 0$, $\D(\FMAJ_h) \leq 3^h$ and $D(\FMAJ_h) = 4^h$.
\end{proposition}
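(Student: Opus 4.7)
The plan is a direct induction on $h$, using the composition properties and the base case already established for $\FMAJ$. No real obstacle is anticipated, since all the hard work is packaged into \prop{det-comp}, \prop{dsc-comp}, and \prop{fmaj-det}.

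For the base case $h=0$, the function $\FMAJ_0$ is the identity on one bit, so $D(\FMAJ_0) = \D(\FMAJ_0) = 1 = 4^0 = 3^0$, and both claims hold trivially.

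For the inductive step, I would assume the claim for $h-1$ and write $\FMAJ_h = \FMAJ \circ \FMAJ_{h-1}$ by the recursive definition. For the subcube partition bound, I would apply \prop{dsc-comp} together with the base case $\D(\FMAJ) \leq 3$ from \prop{fmaj-det}, yielding
\begin{equation*}
\D(\FMAJ_h) = \D(\FMAJ \circ \FMAJ_{h-1}) \leq \D(\FMAJ) \cdot \D(\FMAJ_{h-1}) \leq 3 \cdot 3^{h-1} = 3^h.
\end{equation*}
For the deterministic query complexity, I would invoke \prop{det-comp}, which gives equality under composition, together with the base case $D(\FMAJ) = 4$ from \prop{fmaj-det}:
\begin{equation*}
D(\FMAJ_h) = D(\FMAJ \circ \FMAJ_{h-1}) = D(\FMAJ) \cdot D(\FMAJ_{h-1}) = 4 \cdot 4^{h-1} = 4^h.
\end{equation*}

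Both conclusions follow immediately, completing the induction. The only subtlety worth noting is that the query complexity bound is an equality (since \prop{det-comp} is stated with equality), while the subcube partition bound is only an upper bound, which is all \thm{main} requires.
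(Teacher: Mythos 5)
Your proof is correct and follows the same route as the paper's: the paper also reduces to the base cases $h=0$ and $h=1$ (via \prop{fmaj-det}) and then applies \prop{det-comp} and \prop{dsc-comp} recursively. The only difference is that you have written out the induction explicitly, whereas the paper states it in one line.
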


\begin{proof}
For $h=0$, the statement is trivial and for $h=1$, the statement is \prop{fmaj-det}. 
\prop{det-comp} and \prop{dsc-comp} used recursively imply the result.
\end{proof}

We now introduce terminology that we use to refer to this function. We view $\FMAJ_h$ as defined by the read-once formula on the complete quaternary tree $\rT_h$ of height $h$ in which every internal node is a $\FMAJ$ gate. 
We identify the leaves of $\rT_h$ from left to right with the integers $1, \ldots, 4^h$. For an
input $x \in \{0,1\}^{4^h}$, the bit $x_i$ defines the {\em value} of the leaf $i$. We then evaluate recursively the values of the internal nodes. The value of
the root is $\FMAJ_h(x)$. 
For every internal node $v$ in $\rT_h,$ 
we denote its children by $v_1, v_2, v_3$ and $v_4$, from left to right.
For any node $v$ in $\rT_h$, 
let $Z(v)$ denote the set of variables associated with the leaves
in the subtree rooted at $v$.
We say that a
node $v$ is at level $\ell$ in $\rT_h$ if the distance between $v$ and the
leaves is $\ell$. The root is therefore at level $h$, and the leaves are at
level $0$. For $0 \leq \ell \leq h$, the set nodes at level $\ell$ is denoted by $\rT_h(\ell)$.

\section{Randomized query complexity of \texorpdfstring{$\FMAJ_h$}{4MAJ}}
\label{sec:lb-fmaj}

In this section, we prove our main technical result, a lower bound on the randomized query complexity of $\FMAJ_h$. We prove this by using distributional complexity, that is by using the inequality in \prop{yao}. First, we define a ``hard distribution'' $d_h$ for which we will show that $\Delta_{\varepsilon}^{d_h}(\FMAJ_h) \geq (1-2\varepsilon)(16/5)^h$, which implies our main result (\thm{main}).


\subsection{The hard distribution}

Intuitively, the distribution we use in our lower bound has to be one on which it is difficult to compute $\FMAJ_h$. We start by defining a hard distribution for $\FMAJ$ and extend it to $\FMAJ_h$ in the natural way: by composing it with itself.

The {\em hard distribution $d$} on inputs of length $4$ is defined from $d^0$ and $d^1$, the respective hard distributions for $0$-inputs and $1$-inputs of length $4$, by setting $d(x) = \frac{1}{2} d^b(x)$ when $\FMAJ(x) = b$. 
We define $d^0$ as 
\begin{align}
d^0(1000) &= \frac{2}{5}, \quad d^0(0011) = d^0(0101) = d^0(0110) = \frac{1}{6},\nonumber \\
d^0(0001) &= d^0(0010) = d^0(0100) = \frac{1}{30}, \textrm{\; and \;} d^0(0000) = 0.
\end{align}
The definition of $d^1$ is analogous, or can be defined by $d^1(x_1, x_2, x_3, x_4) = d^0 (1-x_1, 1-x_2, 1-x_3, 1-x_4)$. Given that the function $\FMAJ$ is symmetric in $x_2$, $x_3$, and $x_4$, there are only 4 equivalence classes of $0$-inputs, to which 
we have assigned probability masses $2/5, 1/2, 1/10$, and $0$, and then distributed the probabilities uniformly
inside each class.
The probabilities were chosen to make the recurrence relations in \lem{recursion-J} and \lem{recursion-K} work, while putting more weight on the intuitively difficult inputs. For example $x=0000$ seems like an easy input since all inputs that are Hamming distance 1 from it are also $0$-inputs, and thus reading any $3$ bits of this input is sufficient to compute the function.
In \lem{recursion-K} we will give an equivalent characterisation of the hard distribution which is more directly related to the 
recurrence relations in the lemmas.

From this distribution we recursively define, for $h \geq 0$, the {\em hard distribution $d_h$} on inputs of length $4^h$. In the base case,
$d_0(0) = d_0(1) = \frac{1}{2}$. For $h >0$, 
as for $d$, the distribution $d_h$ is defined from $d^0_h$ and $d^1_h$, the respective hard distributions for 0-inputs and 1-inputs
of length $4^{h}$, by setting $d_h(x) = \frac{1}{2} d^b_h(x)$ when $\FMAJ(x) = b$. 
Let $x = x^{(1)} x^{(2)} x^{(3)} x^{(4)}$ be a $b$-input, where $x^{(i)}$ is a $b_i$-input of length $4^{h-1}$, for
$i \in \{1,2,3,4\}$. Then, $d_h^b(x) = d^b(b_1b_2b_3b_4) \cdot \Pi_{i=1}^4 d_{h-1}^{b_i}(x^{(i)})$.
It is easily seen that according to $d_h$, for  each
node $v$ in $\rT_h$,   
if the value of $v$ is $b$, then the children of $v$ have values distributed according to $d^b$.
With the additional constraints that the root has uniform distribution over $\{0, 1\}$, this actually makes an alternative definition of $d_h$.


We will also require the notion of a minority path in our proof. For a given input, a minority path is a path from the root to a leaf in which each node has a value different from its parent's value. (Recall that the value of a node is the function $\FMAJ$ evaluated on the values of its children.) For example, for the $\FMAJ$ function, on  input $1000$ the unique minority path 
is the edge from the root to the first variable, whereas on input 1001 there are two minority paths from the root to the second and third variable.
In general, since there may be multiple such paths, the minority path is defined to be a random variable over all root--leaf paths.
Formally, for every input $x \in \{0,1\}^{4^h}$, we define the \emph{minority path} $M(x)$ as a random variable over all root--leaf paths in $\rT_h$ as follows. First, the root is always in $M(x)$. Then, for any node $v$ in $M(x)$, if there is a unique child $w$ of $v$ with value different from that of $v$, then $w \in M(x)$. Otherwise, there are exactly two children with different values, and we put each of them in $M(x)$ with probability $\frac{1}{2}$. 
Note that with this definition, if $x$ is chosen from the hard distribution $d_h$, conditioned on the node $v$ being in $M(x)$, the first child $v_1$ is in the minority path with probability $\frac{2}{5}$, and the child $v_i$ is in the minority path with probability $\frac{1}{5}$, for $i \in \{2,3,4\}$.

\subsection{Complexity of \texorpdfstring{$\FMAJ_h$}{4MAJ} under the hard distribution}

We can now lower bound the distributional complexity of $\FMAJ_h$ under the hard distribution.

\begin{restatable}{theorem}{dlower}
\label{thm:distributional-lower}
For all $\varepsilon \geq 0$ and $h \geq 0$, we have 
$\Delta_{\varepsilon}^{d_h}(\FMAJ_h) \geq  (1-2\varepsilon)({16}/{5})^h.$
\end{restatable}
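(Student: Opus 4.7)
The plan is to apply Yao's minimax principle (\prop{yao}) and reduce the problem to lower-bounding the expected cost under $d_h$ of any randomized decision tree $B$ computing $\FMAJ_h$ with error at most $\varepsilon$; I would then proceed by induction on $h$. For the base case $h = 0$, the function $\FMAJ_0$ is the identity on one bit and $d_0$ is uniform on $\{0,1\}$. If $B$ queries the input bit with probability $p$, its expected cost is $p$ and its error probability is $(1-p)/2$ (since otherwise it outputs a fixed constant and is wrong on one of the two equally likely inputs); the error constraint then forces $p \geq 1 - 2\varepsilon$, which matches $(1-2\varepsilon)(16/5)^0$.

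For the inductive step, I would decompose the expected cost of $B$ over the four subtrees rooted at the children $v_1, v_2, v_3, v_4$ of the root of $\rT_h$. The central structural fact I would exploit is that under $d_h$, the minority path $M(x)$ passes through $v_1$ with probability $2/5$ and through each of $v_2, v_3, v_4$ with probability $1/5$; moreover, conditioned on $v_i \in M(x)$, determining $\FMAJ_h(x)$ is essentially equivalent to determining $\FMAJ_{h-1}(x^{(i)})$, since the other three subtrees all evaluate to the complementary bit. From $B$ one can therefore extract, for each $i$ and for each fixing of the other three subtree inputs, a randomized decision tree for $\FMAJ_{h-1}$ whose expected cost under a conditional of $d_{h-1}$ is controlled by the queries $B$ makes inside subtree $i$. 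Applying the induction hypothesis to these four subproblems, weighted appropriately by the minority-path probabilities and by the conditional distributions they inherit from $d^0$ and $d^1$, should yield a multiplicative growth of $16/5$ per level through the coupled recurrences the authors have already flagged as \emph{recursion-J} and \emph{recursion-K}.

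The chief obstacle I anticipate is calibrating a pair of coupled quantities so that their joint recurrence cleanly closes with ratio $16/5$. Because $\FMAJ$ is asymmetric, with the first variable playing a distinguished role, the symmetric LNPV argument for iterated $3\textrm{-}\MAJ$ (\cite{JKS03, LNPV06}) does not transfer verbatim: one needs a primary cost measure (likely tracking the expected queries to leaves on the minority path) together with an auxiliary measure (accounting for queries off the minority path) whose joint recurrence is stable only when $d^0$ assigns precisely the masses $2/5$, $1/6$, and $1/30$ to the three equivalence classes of nontrivial $0$-inputs. Verifying that the resulting coupled recurrence admits the solution $(1-2\varepsilon)(16/5)^h$ given the base-case multiplier $(1-2\varepsilon)$ is the delicate final calculation. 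Once \thm{distributional-lower} is in hand, \prop{yao} combined with the settings $\varepsilon = 0$ and $\varepsilon = 1/3$ immediately delivers $R_0(\FMAJ_h) \geq (16/5)^h = 3.2^h$ and $R(\FMAJ_h) = \Omega(3.2^h)$, completing the proof of \thm{main}.
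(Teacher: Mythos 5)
Your plan correctly names the high-level ingredients — a hard distribution, the minority path with probabilities $2/5,1/5,1/5,1/5$, a reduction from $\FMAJ_h$ to $\FMAJ_{h-1}$, and a pair of coupled quantities closing with ratio $3+\tfrac{1}{5}=\tfrac{16}{5}$ — but what you have written is a research outline, and the one concrete reduction you sketch is conceptually wrong, for a reason that is exactly the ``delicate calibration'' you yourself flag as the open obstacle.

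The flaw is in the embedding. You propose to condition on $v_i \in M(x)$, observe that ``the other three subtrees all evaluate to the complementary bit,'' and then extract from $B$ an algorithm for $\FMAJ_{h-1}(x^{(i)})$ by fixing those three subtree inputs. But if the three siblings of $v_i$ are fixed so that they all take the same value $\neg b$, then $\FMAJ$ at $v$ is $\neg b$ regardless of $v_i$: the restricted function is \emph{constant}, so $B$ under this restriction can answer correctly without ever touching $Z(v_i)$, and the extracted algorithm gives a vacuous lower bound. The paper's Lemma~\ref{lem:recursion-K} does the opposite: it embeds the smaller instance into a \emph{majority} child $w$ of $v$, chosen with probabilities $\Pr[w=v_1]=\tfrac{1}{5}$ and $\Pr[w=v_i]=\tfrac{4}{15}$ for $i\neq 1$ --- not the minority-path probabilities you use --- and fixes the siblings so that $w$ is always in the majority (e.g.\ when $w=v_1$, pick $(v_2,v_3,v_4)$ uniformly from the six non-constant $3$-bit strings). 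Then $\FMAJ_h$ restricted to $Z(w)$ literally equals $\FMAJ_{h-1}$, and the sibling restriction reproduces the hard distribution $d$ on $v$'s children, which is what makes $E_{B'}(u,u)=E_B(w,v)$ and the identity $J_{B'}^\varepsilon(h-1,\ell-1)=\tfrac{1}{3}K_B^\varepsilon(h,\ell)$ go through. The mismatch between the two sets of probabilities ($2/5,1/5,1/5,1/5$ for the minority path vs.\ $1/5,4/15,4/15,4/15$ for the embedding) is the crux, and your plan does not contain it.

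There is also a structural issue with the induction. You propose induction on $h$, but the recurrence that results from the two lemmas is $J^\varepsilon(h,\ell)\geq 3J^\varepsilon(h-1,\ell-1)+\tfrac{1}{5}J^\varepsilon(h,\ell-1)$, whose second term has the \emph{same} $h$ and smaller $\ell$, so an induction on $h$ alone does not close. The paper therefore works with the two-index quantity $J^\varepsilon(h,\ell)$ (queries below level-$\ell$ nodes, conditioned on the minority path) and inducts on $\ell$ with $h$ free; correspondingly the base case you need is $J^\varepsilon(h,0)\geq 1-2\varepsilon$ for \emph{every} $h$, established by noting that any $\varepsilon$-error algorithm for a nonconstant function must query with probability at least $1-2\varepsilon$ and that its first query accounts for a full unit in $\sum_{v\in\rT_h(0)}\Pr[A \text{ queries } v \mid v\in M(x)]$. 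Your $h=0$ calculation is correct but does not supply this stronger base case.
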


To show this, we need to define some quantities.
For a deterministic decision tree algorithm $A$ computing $\FMAJ_h$, let
$L_A(x)$ denote the set of variables queried by $A$ on input $x$.
Let $B$ be a randomized decision tree algorithm that computes $4$-MAJ$_h$ with error $\varepsilon$, and let $b$ be its probability distribution over deterministic algorithms. For any two (not necessarily distinct) nodes of $\rT_h$, $u$ and $v$, we define the function $E_B(v,u)$ as
$E_B(v,u) \defeq \mathbb{E}\big[|Z(v) \cap L_A(x)| \big| u \in M(x)\big],$
where the expectation is taken over $b, d_h$ and the randomness in $M(x)$. In words, $E_B(v,u)$ is the expected number of queries below the node $v$
over the randomness of $B$, the hard distribution and the randomness for the choice of the minority path, 
under the condition that $u$ is in the minority path.
For $ 0 \leq \ell \leq h$, we also define the functions $J_B^\varepsilon(h,\ell)$, $K_B^\varepsilon(h,\ell)$, $J^\varepsilon(h,\ell)$, and $K^\varepsilon(h,\ell)$ by
\begin{align}
J_B^\varepsilon(h,\ell) &\defeq  \sum_{v \in \rT_{h}(\ell)} E_B(v,v), \\
K_B^\varepsilon(h,\ell) &\defeq    \sum_{v \in \rT_{h}(\ell)}
\left(\frac{2}{5} \sum_{i=2}^4 E_B(v_i, v_1) +\frac{1}{5} \sum_{j=2}^4  \sum_{i \neq j}  
E_B(v_i, v_j) \right),
\end{align}
\begin{equation}
J^\varepsilon(h,\ell) \defeq \min_{B \in \mathcal{R}_{\varepsilon}(\FMAJ_h)} J_B^\varepsilon(h,\ell) \quad \text{and} \quad 
K^\varepsilon(h,\ell) \defeq \min_{B \in \mathcal{R}_{\varepsilon}(\FMAJ_h)} K_B^\varepsilon(h,\ell).
\end{equation}
Observe that $J^\varepsilon(h,h) =  \min_{B \in \mathcal{R}_{\varepsilon}(\FMAJ_h)} \mathbb{E}[C(A,x)]
\leq \Delta_{\varepsilon}^{d_h}(\FMAJ_h)$.

The proof of \thm{distributional-lower} essentially follows from the following two lemmas.

\begin{restatable}{lemma}{recJ}
\label{lem:recursion-J}
For all $0<l\leq h$, we have that
$J^\varepsilon(h,\ell) \geq K^\varepsilon(h,\ell)+\frac{1}{5}J^\varepsilon(h,\ell-1).$
\end{restatable}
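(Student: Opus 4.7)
The plan is to derive the inequality at the level of a fixed randomized decision tree $B \in \mathcal{R}_\varepsilon(\FMAJ_h)$ and only take minima at the very end. For any node $v \in \rT_h(\ell)$ with children $v_1, v_2, v_3, v_4$, my first step would be to expand $E_B(v,v) = \mathbb{E}[|Z(v) \cap L_A(x)| \mid v \in M(x)]$ by conditioning on which child of $v$ lies on the minority path. Since $\ell > 0$, exactly one child of $v$ belongs to $M(x)$ whenever $v$ does, and the observation recorded just after the definition of $M(x)$ says that under the hard distribution this child is $v_1$ with conditional probability $2/5$ and $v_j$ with conditional probability $1/5$ for $j \in \{2,3,4\}$. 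Combined with the disjoint decomposition $Z(v) = Z(v_1) \cup Z(v_2) \cup Z(v_3) \cup Z(v_4)$ and linearity of conditional expectation, this gives
\begin{equation*}
E_B(v,v) \;=\; \frac{2}{5} \sum_{i=1}^{4} E_B(v_i, v_1) \;+\; \frac{1}{5} \sum_{j=2}^{4} \sum_{i=1}^{4} E_B(v_i, v_j).
\end{equation*}

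Next I would separate, inside each inner sum, the diagonal term (where $i$ equals the second argument of $E_B$) from the three off-diagonal terms. The off-diagonal terms reassemble precisely into the contribution of $v$ to $K_B^\varepsilon(h,\ell)$, namely $\tfrac{2}{5}\sum_{i=2}^{4} E_B(v_i, v_1) + \tfrac{1}{5}\sum_{j=2}^{4} \sum_{i \neq j} E_B(v_i, v_j)$. The diagonal terms collect into $\tfrac{2}{5} E_B(v_1, v_1) + \tfrac{1}{5}\sum_{j=2}^{4} E_B(v_j, v_j)$; since every $E_B(\cdot,\cdot) \geq 0$, I would lower bound this by relaxing the first coefficient from $2/5$ to $1/5$, obtaining $\tfrac{1}{5}\sum_{i=1}^{4} E_B(v_i, v_i)$. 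Summing the resulting pointwise inequality over $v \in \rT_h(\ell)$ and using that the children of level-$\ell$ nodes enumerate each level-$(\ell-1)$ node exactly once yields $J_B^\varepsilon(h,\ell) \geq K_B^\varepsilon(h,\ell) + \tfrac{1}{5} J_B^\varepsilon(h,\ell-1)$.

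To finish, I would take the minimum over $B \in \mathcal{R}_\varepsilon(\FMAJ_h)$ using the elementary inequality $\min_B (f(B)+g(B)) \geq \min_B f(B) + \min_B g(B)$, which immediately delivers the claim. The step requiring the most care is the conditional-expectation expansion, because of the asymmetry of the minority-path transition: the first child carries weight $2/5$ while the other three carry $1/5$ each, reflecting the privileged role of $x_1$ in $\FMAJ$. The fact that the target coefficient in the lemma is a symmetric $1/5$ on all four children is what lets one relax the $2/5$ weight on the first diagonal term for free; any attempt to preserve the $2/5$ coefficient there would break the uniform recursion that the statement requires.
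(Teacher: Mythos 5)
Your proof is correct and follows essentially the same route as the paper's: condition $E_B(v,v)$ on the minority child of $v$ using the $(2/5,1/5,1/5,1/5)$ transition probabilities, separate the resulting double sum into the off-diagonal part (which is exactly $K_B^\varepsilon(h,\ell)$) and the diagonal part, discard the surplus $\tfrac{1}{5}\sum_v E_B(v_1,v_1)$ to get $\tfrac{1}{5}J_B^\varepsilon(h,\ell-1)$, and finish by superadditivity of $\min$. Your bookkeeping of the diagonal term is a touch more explicit than the paper's (which writes the slack term without its outer sum over $v$, a minor typo), but the argument is the same.
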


\begin{proof}
This proof mainly involves expanding the quantity $E_B(v,v)$ in terms of $E_B(v_i,v_j)$, where $v_1, v_2, v_3$, and $v_4$ are the children of $v$.
Since, for every node $v$, the set of leaves below $v$ is the disjoint union of the sets of leaves below 
its children, for every $B$ we have that
\begin{equation}
J_B^\varepsilon(h,\ell) =
\sum_{v \in \rT_{h}(\ell)} \sum_{i=1}^4 E_B(v_i,v).
\end{equation}
By conditioning on the minority child of $v$, we get that
\begin{equation}
J_B^\varepsilon(h,\ell) =
\sum_{v \in \rT_{h}(l)} \sum_{i=1}^4 \sum_{j=1}^4
E_B(v_i, v_j) \Pr [v_j \in M(x) | v \in M(x)] ~.
\end{equation}
As mentioned before, if $x$ is chosen according to the distribution $d_h$, if $v \in M(x)$, then $v_1 \in M(x)$ with probability $\frac{2}{5}$ and $v_i \in M(x)$ with probability $\frac{1}{5}$, for $i \in \{2,3,4\}$. Substituting these values we get
\begin{equation}
J_B^\varepsilon(h,\ell) = K_B^\varepsilon(h,\ell)+\frac{1}{5}J_B^\varepsilon(h,\ell-1) + \frac{1}{5} E_B(v_1, v_1).
\end{equation}
Discarding the last term on the right hand side, which is always non-negative, and 
taking the minimum over $B$ for all remaining terms gives the result.
\end{proof}

Having established this, we need to relate $K^\varepsilon(h,\ell)$ with $J^\varepsilon(h-1,\ell-1)$. Informally, given a randomized algorithm that performs well on $\FMAJ_h$ at depth $\ell$, we construct another algorithm that performs well on $\FMAJ_{h-1}$ at depth $\ell-1$.

\begin{restatable}{lemma}{recK}
\label{lem:recursion-K}
For all $0 < \ell \leq h$, we have that $K^\varepsilon(h,\ell) \geq 3J^\varepsilon(h-1,\ell-1)$.
\end{restatable}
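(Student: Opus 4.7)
My plan is to construct, from any randomized algorithm $B$ computing $\FMAJ_h$ with error $\varepsilon$, a randomized algorithm $B'$ computing $\FMAJ_{h-1}$ with error $\varepsilon$ satisfying $J_{B'}^\varepsilon(h-1,\ell-1) \le \tfrac{1}{3}K_B^\varepsilon(h,\ell)$. Combined with the definition $J^\varepsilon(h-1,\ell-1) = \min_{B'} J_{B'}^\varepsilon(h-1,\ell-1)$, this yields $K_B^\varepsilon(h,\ell) \ge 3 J^\varepsilon(h-1,\ell-1)$ for every $B$, and minimizing over $B$ completes the proof.

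The main geometric tool is an embedding of $\rT_{h-1}$ into $\rT_h$ obtained by adding one extra ``quartet level'' at the bottom: each leaf $w$ of $\rT_{h-1}$ is identified with a level-$1$ node $q_w$ of $\rT_h$ whose four leaves form a quartet. Given $y \sim d_{h-1}$, one produces $x \in \{0,1\}^{4^h}$ by sampling each quartet's four leaf bits from the conditional hard distribution $d^{y_w}$. Because $d_h$ is defined recursively through $d^0, d^1$, the planted $x$ is distributed exactly as $d_h$, satisfies $\FMAJ_h(x) = \FMAJ_{h-1}(y)$, and the induced bijection $\sigma:\rT_h(\ell)\to\rT_{h-1}(\ell-1)$ for $\ell \geq 1$ preserves minority paths: $v \in M(x)$ iff $\sigma(v) \in M(y)$, and conditional on this event the distribution over $v$'s minority child-index $j \in \{1,2,3,4\}$ is exactly the asymmetric $(2/5,1/5,1/5,1/5)$ --- the same coefficients that appear in the definition of $K^\varepsilon(h,\ell)$.

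The algorithm $B'$ would simulate $B$ on this planted $x$ while injecting additional randomness local to each internal node $v$ of $\rT_h$: it picks uniformly and independently one of the three non-minority indices $i \in \{1,2,3,4\}\setminus\{j\}$ to be the ``live'' slot, translates queries $B$ makes inside the live non-minority subtree into queries on $y$ via the embedding, and answers queries in the two dead non-minority siblings and in the minority subtree from independently resampled bits drawn from the correct conditional $d^b$ distribution. Conditional independence of sibling subtrees given their parent's value under $d_h$ keeps the marginal of $x$ equal to $d_h$, so $B$'s worst-case error guarantee transfers to $B'$. A direct calculation then gives, for each $u' \in \rT_{h-1}(\ell-1)$ with $v = \sigma^{-1}(u')$,
\begin{equation}
E_{B'}(u',u') = \sum_{j} p_j \cdot \tfrac{1}{3}\sum_{i\ne j}E_B(v_i,v_j) = \tfrac{1}{3}\Bigl(\tfrac{2}{5}\sum_{i\ne 1}E_B(v_i,v_1) + \tfrac{1}{5}\sum_{j=2}^{4}\sum_{i\ne j}E_B(v_i,v_j)\Bigr),
\end{equation}
where $p_1 = 2/5$ and $p_j = 1/5$ for $j \ne 1$. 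Summing over $u' \in \rT_{h-1}(\ell-1)$ (equivalently, over $v \in \rT_h(\ell)$ via $\sigma$) produces $J_{B'}^\varepsilon(h-1,\ell-1) = \tfrac{1}{3} K_B^\varepsilon(h,\ell)$.

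The hardest step will be making the simultaneous live/dead resampling at every internal node of $\rT_h$ consistent: the resampled dead subtrees must not corrupt either the $d_h$-marginal of $x$ (which is where the conditional independence of children given their parent's value is essential) or the precise minority-path correspondence used to turn the conditioning event ``$u' \in M(y)$'' into ``$v \in M(x)$ with $v_j \in M(x)$'' inside $E_B(v_i, v_j)$. The clean match between the $1/3$ from picking one live slot out of three non-minority siblings, the $(2/5, 1/5, 1/5, 1/5)$ weights from the conditional minority-index distribution, and the coefficients defining $K^\varepsilon$ is precisely what justifies the asymmetric form chosen for $d^0$ and $d^1$ in the construction of the hard distribution.
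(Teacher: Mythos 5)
Your high-level plan — construct $B' \in \mathcal{R}_\varepsilon(\FMAJ_{h-1})$ from $B \in \mathcal{R}_\varepsilon(\FMAJ_h)$ so that $J_{B'}^\varepsilon(h-1,\ell-1) = \tfrac{1}{3}K_B^\varepsilon(h,\ell)$ — matches the paper, and your closing algebra expanding $\sum_j p_j \cdot \tfrac{1}{3}\sum_{i\neq j}E_B(v_i,v_j)$ is indeed equal to one third of the per-node contribution to $K_B^\varepsilon(h,\ell)$. But the mechanism you describe for realizing $B'$ has a genuine gap, and it is not the one in the paper.

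The central problem is that your ``live/dead'' choice at each node $v$ is drawn \emph{uniformly from the three non-minority child-indices $\{1,2,3,4\}\setminus\{j\}$}. The minority index $j$ is a function of the node values, hence of the input, so an algorithm $B'$ cannot sample its live slot from this conditional distribution before (or without) learning $j$ — and learning $j$ requires queries that would already have to be charged. You flag this as ``the hardest step,'' but it is not a detail to be patched later; as stated it makes the construction circular. Your embedding is also inverted relative to what works: identifying leaves of $\rT_{h-1}$ with level-$1$ nodes of $\rT_h$ does not give a variable-to-variable map, so when $B$ queries a leaf of $\rT_h$ there is no canonical $y$-query for $B'$ to perform, and the proposal never pins down when $B'$ actually consumes a query to $y$.

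The paper sidesteps both issues by collapsing between levels $\ell$ and $\ell-1$ of $\rT_h$ rather than at the bottom. For each $v\in\rT_h(\ell)$ it picks a child $w$ with the \emph{unconditional} probabilities $(\tfrac{1}{5},\tfrac{4}{15},\tfrac{4}{15},\tfrac{4}{15})$ (exactly the marginal you would get by averaging ``uniform over the non-minority children'' against the minority-index distribution $(\tfrac{2}{5},\tfrac{1}{5},\tfrac{1}{5},\tfrac{1}{5})$), and it then hard-wires the three siblings' values so that $w$ is a majority child of $v$ \emph{no matter what value $w$ takes}: if $w=v_1$ the siblings get one of the six balanced patterns, and if $w\neq v_1$ the node $v_1$ gets a uniform bit and the remaining two siblings its complement. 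This makes the entire random identification, restriction, and subtree-sampling independent of the input $y$, while still inducing the hard distribution $d$ on the children of $v$ once $w$ takes a random bit. Because $Z(u)$ in $\rT_{h-1}$ is identified variable-for-variable with $Z(w)$ in $\rT_h$, one gets $E_{B'}(u,u)=E_B(w,v)$ cleanly, and the conditional probabilities $\Pr[v_j\in M(x)\mid w=v_i,\,v\in M(x)]$ (computed from the sibling fixings) are what produce the $\tfrac{1}{3}$ factor — without ever having $B'$ condition on the unknown minority index. In short: your identity for $E_{B'}(u',u')$ is the right target and the right intuition for where the constants come from, but to prove it you need the paper's input-independent ``always-majority'' choice of $w$ at level $\ell$, not a minority-conditioned live slot planted at the leaves.
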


\begin{proof}
For any $B \in \mathcal{R}_\varepsilon(\FMAJ_h)$, we will construct $B' \in \mathcal{R}_\varepsilon(\FMAJ_{h-1})$ such that
\begin{equation}
\label{eq:KJ}
\frac{1}{3}K_B^\varepsilon(h,\ell) = J_{B'}^\varepsilon(h-1,\ell-1).
\end{equation}
Taking the minimum over all $B \in \mathcal{R}_\varepsilon(\FMAJ_h)$ implies the statement.

We start by giving a high level description of our construction of $B'$ from $B$.
First $B'$ will choose a random injective mapping from  $\{x_1, \ldots , x_{4^{h-1}}\}$ to $\{x_1, \ldots , x_{4^{h}}\}$, identifying each variable of $\rT_{h-1}$ with some variable of $\rT_h$. Then, it will choose a random restriction for the remaining variables of $\rT_h$. Note that these choices are not be made uniformly. Let $B_r$ denote  the algorithm for $4^{h-1}$ variables defined by $B$ after the identification and the restriction according to randomness $r$. $B'$ then simply executes $B_r$. Our embedding of the smaller instance into the larger instance is done in  a way that preserves the output.

We now describe the random identification and restriction in detail.
First, observe that there is a natural correspondence between the nodes of $\rT_{h-1}(\ell - 1)$ and $\rT_h(\ell)$ (since they are of the same size):
we simply map the $i$th node of $\rT_{h-1}(\ell - 1)$ from the left to the $i$th node of $\rT_{h}(\ell)$ from the left.
For every node $u \in \rT_{h-1}(\ell - 1)$, let $v \in \rT_h(\ell)$ be its corresponding node. The algorithm $B'$ makes the following independent random choices.
To generate the random identification, $B'$ randomly chooses a child $w$ of $v$, where 
$w = v_1$ with probability $\frac{1}{5}$, and $w= v_i$ with probability $\frac{4}{15}$, for $i \in \{2,3,4\}$.
Then, the variables of $Z(u)$ and the variables of $Z(w)$ are identified naturally, again from left to right.

For generating the random restriction, $B'$ first generates random values for the three siblings of $w$.
If $w=v_1$, then it chooses for $(v_2, v_3, v_4)$ one of the
six strings from $\{001, 010, 100, 110, 101, 011\}$ uniformly at random. If $w \in \{v_2, v_3, v_4\}$, it chooses for $v_1$,
a uniformly random value from $\{0,1\}$, and for the remaining two siblings, it picks the opposite value.
From this, the restriction is generated as follows: for each sibling $w'$ of $w$ with value $b \in \{0,1\}$, 
a random string of length $4^{\ell - 1}$ is generated according to $d^b_{\ell -1}$, and the variables in $Z(w')$ receive
the values of this string. This finishes the description of $B'$.

We now show that $B' \in \mathcal{R}_\varepsilon(\FMAJ_{h-1})$.
Because of the identification of the variables of $Z(u)$ and $Z(w)$, for every $x \in \{0,1\}^{4^{h-1}}$, the value of $u$ coincides
with the value of $w$. The random values chosen for $w$ are such that whatever value $w$ gets, it is always a majority child of $v$.
Therefore, for every input $x$, and for every randomness $r$, the value of $u$ is the same as the value of $v$. This implies that
for every $x$ and every randomness $r$, the value of the roots of $\rT_{h-1}(\ell - 1)$ and $\rT_h(\ell)$ are the same. 
Since $B$ is an algorithm which computes $\FMAJ_h$ with error at most $\varepsilon$, this means that $B_r$
is an algorithm which computes $\FMAJ_{h-1}$ with error at most $\varepsilon$, for every randomness $r$. From this, it
follows that $B' \in \mathcal{R}_\varepsilon(\FMAJ_{h-1})$.

Finally we prove 
the equality in \eq{KJ}. 
For this, the main observation (which can be checked by direct calculation) is
that when $w$ gets a random Boolean value, the distribution of
values generated by $B'$ on the children of $v$ is exactly the hard distribution $d$. 
Therefore, $E_{B'}(u,u) = E_B(w,v)$. Consequently, we have that
\begin{align}
J_{B'}^\varepsilon(h-&1,\ell-1) = \sum_{v \in \rT_h(\ell)} E_B(w,v)   = \sum_{v \in \rT_h(\ell)} \sum_{i=1}^4 E_B(v_i,v) \Pr [w=v_i | v \in M(x)] \nonumber \\
& = \sum_{v \in \rT_h(\ell)} \sum_{i=1}^4 \sum_{j=1}^4 E_B(v_i, v_j) \Pr [w=v_i ] \Pr [v_j \in M(x) | w=v_i, v \in M(x)] \nonumber \\
& =  \frac{1}{3}K_B^\varepsilon(h,\ell).
\end{align}
The third equality holds since the choice of $w$ is independent from the fact that $v$ is in the minority path. For the
last equality, we used that the conditional probabilities evaluate to the following values:
\begin{align*}
\Pr [v_j \in M(x) | w=v_j, v \in M(x)] & = 0,   \qquad \text{for} ~ j \in \{1,2,3,4\}; \\
\Pr [v_j \in M(x) | w=v_1, v \in M(x)] & = \frac{1}{3},  \qquad  \text{for} ~ j \neq 1; \\
\Pr [v_1 \in M(x) | w=v_i, v \in M(x)] & = \frac{1}{2},   \qquad \text{for} ~ i \neq 1; \\
\Pr [v_j \in M(x) | w=v_i, v \in M(x)] & = \frac{1}{4}, \qquad  \text{for} ~ i,j \in \{2,3,4\} \text{ and } i \neq j. \qedhere
\end{align*}
\end{proof}

We can now return to proving \thm{distributional-lower}.

\begin{proof}[Proof of \protect{\thm{distributional-lower}}]
We claim that, for all $0 \leq \ell \leq h,$ we have that
\begin{equation}
J^\varepsilon(h,\ell) \geq (1 - 2 \varepsilon) ({16}/{5})^{\ell}.
\end{equation}
The proof is done by induction on $\ell$. For the base case $\ell =0$, let $B \in \mathcal{R}_{\varepsilon}(\FMAJ_h)$. Then, we have that
\begin{equation}
J_B^\varepsilon(h,0) = \sum_{v \in \rT_{h}(0)} \Pr [B~ \mbox{{\rm  queries }} v \big| v \in M(x) ].
\end{equation}
Observe that any randomized decision tree algorithm 
computing a nonconstant function
with error at most $\varepsilon$ must make at least one query  with probability at least $1-2\varepsilon$, since otherwise it would output $0$ or $1$  with probability greater than $\varepsilon$, and thus on some input would err too much. 
Let therefore $A$ be a deterministic algorithm from the support of $B$ 
which makes at least one query. Then
\begin{equation}
\sum_{v \in \rT_{h}(0)} \Pr [A~ \mbox{{\rm  queries }} v \big| v \in M(x) ] \geq
\sum_{v \in \rT_{h}(0)} \Pr [A~ \mbox{{\rm first query  is }} v \big| v \in M(x) ] = 1,
\end{equation}
since in the summation the term corresponding to the first query of $A$ is 1, whereas all other terms are 0.
Thus, $J(h,0) \geq 1-2\varepsilon$ for all $h \geq 0$.

Now let $\ell >0$, and assume the statement holds for $\ell -1$. For $h \geq \ell$,
using  \lem{recursion-J} and \lem{recursion-K}, we get that
$J^\varepsilon(h,\ell) \geq 3J^\varepsilon(h-1,\ell-1)+\frac{1}{5}J^\varepsilon(h,\ell-1)$. 
Therefore, by the induction hypothesis, we have that
\begin{equation}
J^\varepsilon(h,\ell) \geq 3(1-2\varepsilon)\bigg(\frac{16}{5}\bigg)^{\ell-1}
+\frac{1}{5}(1-2\varepsilon)\bigg(\frac{16}{5}\bigg)^{\ell-1} \\
= (1-2\varepsilon)\bigg(\frac{16}{5}\bigg)^l~.
\end{equation}
The theorem follows when we set $h = \ell$ by noting that  $J^\varepsilon(h,h) \leq \Delta_{\varepsilon}^{d_h}(\FMAJ_h)$.
\end{proof}

Combining \prop{dsc-h} and \thm{distributional-lower} gives us our main result, an asymptotic separation between deterministic subcube partition complexity and randomized query complexity:

\main*

We can also immediately deduce that the $\FMAJ_h$ function positively answers both \ques{fkw} and \ques{jain}.

\begin{corollary}\label{c-friedgut}
We have that $\R_0(\FMAJ_h) = o(R_0(\FMAJ_h)).$
\end{corollary}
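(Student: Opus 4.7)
The plan is to sandwich the two complexities using bounds already established in the paper and observe that the resulting exponential bases differ. Concretely, I would combine an upper bound of the form $\R_0(\FMAJ_h) \le 3^h$ with a lower bound of the form $R_0(\FMAJ_h) \ge (16/5)^h$, and then note that $16/5 > 3$ so the ratio decays geometrically.

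For the upper bound, I would read off the chain $\R_0(\FMAJ_h) \le \D(\FMAJ_h) \le 3^h$. The first inequality is one of the arrows of \fig{rel} (and is immediate: any deterministic subcube partition is a special case of a randomized one, whose cost equals the deterministic cost). The second inequality is exactly \prop{dsc-h}.

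For the lower bound, I would invoke \thm{distributional-lower} with $\varepsilon = 0$ to obtain $\Delta_0^{d_h}(\FMAJ_h) \ge (16/5)^h$, and then apply \prop{yao} (which is valid for all $\varepsilon \ge 0$, including the zero-error case, where both sides are measured in expected cost) to conclude
\begin{equation*}
R_0(\FMAJ_h) \;\ge\; \Delta_0^{d_h}(\FMAJ_h) \;\ge\; (16/5)^h.
\end{equation*}

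Putting the two bounds together gives
\begin{equation*}
\frac{\R_0(\FMAJ_h)}{R_0(\FMAJ_h)} \;\le\; \left(\frac{3}{16/5}\right)^{h} \;=\; \left(\frac{15}{16}\right)^{h},
\end{equation*}
which tends to $0$ as $h \to \infty$, yielding $\R_0(\FMAJ_h) = o(R_0(\FMAJ_h))$. There is no real obstacle here since all the work has already been done: the content of the corollary is simply that the numerical gap between the deterministic subcube partition complexity $3^h$ and the lower bound $3.2^h$ on zero-error randomized query complexity is strict enough to separate the two models asymptotically in the zero-error randomized setting, thereby resolving \ques{fkw}.
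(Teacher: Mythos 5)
Your proposal is correct and matches the paper's (implicit) argument exactly: the paper deduces the corollary immediately from \thm{main} by combining $\R_0(\FMAJ_h) \le \D(\FMAJ_h) \le 3^h$ with $R_0(\FMAJ_h) \ge (16/5)^h$ and observing $3 < 16/5$. Your parenthetical remark on why \prop{yao} applies at $\varepsilon=0$ (expected cost on both sides) is also the right justification for the zero-error case.
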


\begin{corollary}\label{c-jain}
For $0 \leq \varepsilon \leq 1/3$, we have that $\text{\emph{PPRT}}_\varepsilon(\FMAJ_h) = o(R_\varepsilon(\FMAJ_h)).$
\end{corollary}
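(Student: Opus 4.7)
The plan is to combine three ingredients already established in the paper: the $O(3^h)$ upper bound on $\D(\FMAJ_h)$ from \prop{dsc-h}, the $\Omega((16/5)^h)$ distributional lower bound of \thm{distributional-lower}, and the chain of inequalities from \fig{rel} that sandwiches $\PPRT_\varepsilon$ between subcube partition complexity on one side and randomized query complexity on the other. Because $3 < 16/5$, a constant-base gap in the exponent automatically yields the $o(\cdot)$ separation, so the only real task is to make sure both bounds hold uniformly on the range $\varepsilon \in [0, 1/3]$.

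First I would bound $\PPRT_\varepsilon(\FMAJ_h)$ from above. For $\varepsilon > 0$, \prop{pprt-rsc} gives $\PPRT_\varepsilon(\FMAJ_h) \leq \R_\varepsilon(\FMAJ_h)$, and because every deterministic subcube partition can be viewed as a (degenerate) randomized one with identical cost, $\R_\varepsilon(\FMAJ_h) \leq \D(\FMAJ_h) \leq 3^h$ by \prop{dsc-h}. For the endpoint $\varepsilon = 0$, \prop{pprt-rsc} directly supplies $\PPRT_0(\FMAJ_h) \leq \D(\FMAJ_h) \leq 3^h$. Hence $\PPRT_\varepsilon(\FMAJ_h) \leq 3^h$ throughout the claimed range.

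Next I would lower bound $R_\varepsilon(\FMAJ_h)$ by applying Yao's principle (\prop{yao}) to the hard distribution $d_h$ and then invoking \thm{distributional-lower}:
\[
R_\varepsilon(\FMAJ_h) \ \geq\ \Delta^{d_h}_\varepsilon(\FMAJ_h) \ \geq\ (1-2\varepsilon)\left(\frac{16}{5}\right)^{h}.
\]
For $\varepsilon \leq 1/3$, the factor $1-2\varepsilon$ is bounded below by the positive constant $1/3$, so $R_\varepsilon(\FMAJ_h) = \Omega\bigl((16/5)^h\bigr)$.

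Dividing the two bounds gives $\PPRT_\varepsilon(\FMAJ_h)/R_\varepsilon(\FMAJ_h) = O\bigl((15/16)^h\bigr) \to 0$ as $h \to \infty$, which is exactly the claimed $o(\cdot)$ relation. There is no genuine obstacle here: every piece has already been proved, and the only mild subtlety is that \prop{pprt-rsc} splits into an $\varepsilon > 0$ case and an $\varepsilon = 0$ case, both of which must be invoked to cover the full range $[0, 1/3]$ in the statement.
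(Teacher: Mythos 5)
Your argument is correct and matches the paper's (implicit) reasoning exactly: the paper leaves the corollary as an immediate consequence of Proposition~\ref{prop:dsc-h}, Theorem~\ref{thm:distributional-lower}, Proposition~\ref{prop:yao}, and the chain $\mathrm{PPRT}_\varepsilon \leq \R_\varepsilon \leq \D$ (resp.\ $\mathrm{PPRT}_0 \leq \D$) from Figure~\ref{fig:rel}, which is precisely the sandwich you assemble. Your handling of the endpoint $\varepsilon = 0$ and the observation that $1 - 2\varepsilon \geq 1/3$ on the stated range are the right details to check.
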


\section{Discussion and open problems}
\label{sec:disc}

Our main result is actually stronger than stated. In addition to the zero-error and $\varepsilon$-error randomized query complexities we defined, we can also define $\varepsilon$-error expected randomized complexity. In this model, we only charge for the expected number of queries made by the randomized algorithm, like in the zero-error case, but we also allow the algorithm to err. Formally, the \emph{$\varepsilon$-error expected randomized query complexity of $f$} is 
$R_\varepsilon^{\e}(f)= \min_{B \in \mathcal{R}_{\varepsilon}(f)}\max_{x} C(B,x))$. 
Observe that since this generalizes zero-error randomized query complexity, $R^\e_0(f) = R_0(f)$, and it 
is immediate that, for all $\varepsilon \geq 0$, we have that $R^{\e}_{\varepsilon}(f) \leq R_{\varepsilon}(f) \leq D(f)$.

Randomized query complexity is usually defined in the worst case~\cite{BW02}, that is as $R_\varepsilon(f)$ instead of $R^{\e}_{\varepsilon}(f)$. The main reason for not dealing with these measures separately is that worst case and expected randomized complexities are closely related. We have already observed that (obviously), in expectation, one can not make more queries than in the worst case. On the other hand, if for some constant $\eta >0$ we let the randomized algorithm that achieves $R_\varepsilon^{\e}(f)$ make $\frac{1}{2\eta} R_\varepsilon^{\e}(f)$ queries, and give a random answer in case the computation is not finished, we get an algorithm of error $\varepsilon + \eta$ which never makes more than $\frac{1}{2\eta} R_\varepsilon^{\e}(f)$ queries. Therefore, for all $\varepsilon \geq 0$ and $\eta > 0$, we have that $R_{\varepsilon + \eta}(f) \leq \frac{1}{2\eta} R_\varepsilon^{\e}(f)$.

The result we show actually lower bounds $R^{\e}_{\varepsilon}(f)$ as well. Thus, a stronger version of our result is the following: For all $\varepsilon \geq 0$,  $R_\varepsilon^{\e}(\FMAJ_h) \geq (1-2\varepsilon)(3.2)^h$.

We end with some open problems. It would be interesting to exactly pin down the randomized query complexity of $\FMAJ_h$. For example we know that $R_0(\FMAJ_h) \geq 3.2^h$ and $R_0(\FMAJ_h) \leq 3.25^h$. The best separation between subcube partition complexity and query complexity remains open, even in the deterministic case. For example, we know that $\D(f) \leq D(f)$ and $D(f) \leq (\D(f))^2$, so they are at most quadratically different. The $\FMAJ_h$ function shows that there exists a function for which $D(f) \geq \D(f)^{\log_3 4} \geq (\D(f))^{1.26}$. Can this separation or the quadratic upper bound be improved?

Finally it would be interesting to know if the partition bounds also lower bound expected randomized query complexity, and in particular whether the zero-error partition bound lower bounds zero-error randomized query complexity. 

\section*{Acknowledgments}

The research is partially funded by the Singapore Ministry of Education and the
National Research Foundation, also through the Tier 3 Grant ``Random numbers
from quantum processes,'' MOE2012-T3-1-009; the European Commission
IST STREP project Quantum Algorithms (QALGO) 600700; the French ANR
Blanc program under contract ANR-12-BS02-005 (RDAM project); and 
the ARO grant Contract Number W911NF-12-1-0486. This preprint is MIT-CTP \#4663.

\bibliographystyle{alpha}
\bibliography{partition}

\end{document}